\newtheorem{theorem}{Theorem}
\newtheorem{lemma}[theorem]{Lemma}
\newtheorem{conjecture}[theorem]{Conjecture}
\newtheorem{definition}[theorem]{Definition}
\def\a{\alpha}
\def\b{\beta}
\def\g{\gamma}
\def\G{\Gamma}
\def\f{\phi}
\def\h{\eta}
\def\l{\lambda}
\def\L{\Lambda}
\def\m{\mu}
\def\r{\rho}
\def\s{\sigma}
\def\j{\varphi}
\def\q{\theta}
\def\x{\xi}
\def\y{\psi}
\def\z{\zeta}
\def\tr{\text{tr}}
\def\erf{\text{erf}}
\def\ox{\otimes}
\def\id{\mathds{1}}
\def\limn{\lim_{N \to \infty}}
\def\ton{\underset{N \to \infty}{\longrightarrow}}
\newcommand{\avg}[1]{\langle{#1}\rangle}
\newcommand{\ket}[1]{| {#1} \rangle} 
\newcommand{\bra}[1]{\langle {#1} |} 
\newcommand{\braket}[2]{\langle {#1} \vphantom{#2} | {#2} \vphantom{#1} \rangle}
\newcommand{\mel}[3]{\langle {#1} \vphantom{#2} | {#2} \vphantom{#3} | {#3} \rangle}
\newcommand{\bket}[1]{\big| {#1} \big\rangle}
\newcommand{\bbra}[1]{\big\langle {#1} \big|}
\DeclareMathOperator*{\Span}{Span}
\DeclareMathOperator{\sgn}{sign}
\begin{document}
\title{Quantum theory at the macroscopic scale}
\author{Miguel Gallego}
\email{miguel.gallego.ballester@univie.ac.at}
\affiliation{University of Vienna, Faculty of Physics, Vienna Center for Quantum Science and Technology, Boltzmanngasse 5, 1090 Vienna, Austria}
\affiliation{University of Vienna, Vienna Doctoral School in Physics, Bolztmanngasse 5, 1090 Vienna, Austria}
\author{Borivoje Daki\'c}
\email{borivoje.dakic@univie.ac.at}
\affiliation{University of Vienna, Faculty of Physics, Vienna Center for Quantum Science and Technology, Boltzmanngasse 5, 1090 Vienna, Austria}
\affiliation{
Institute for Quantum Optics and Quantum Information (IQOQI),
Austrian Academy of Sciences, Boltzmanngasse 3,
A-1090 Vienna, Austria}

\date{\today}

\begin{abstract}
The quantum description of the microscopic world is incompatible with the classical description of the macroscopic world, both mathematically and conceptually. Nevertheless, it is generally accepted that classical mechanics emerges from quantum mechanics in the macroscopic limit. In this paper, we challenge this perspective and demonstrate that the behaviour of a macroscopic system can retain all aspects of the quantum formalism, in a way that is robust against decoherence, particle losses and coarse-grained (imprecise) measurements. This departure from the expected classical description of macroscopic systems is not merely mathematical but also conceptual, as we show by the explicit violation of a Bell inequality and a Leggett-Garg inequality.
\end{abstract}

\maketitle

\section{Introduction}
Quantum mechanics is one of the most successful scientific theories, and it is generally accepted as more fundamental than classical mechanics. However, quantum behaviour is not observed at larger scales, where classical physics provides a better description. To explain the macroscopic world we perceive in our everyday life, it is believed that there must exist a quantum-to-classical transition or, in other words, that classical mechanics must somehow emerge from quantum mechanics in the macroscopic limit (somewhat in the spirit of Bohr's correspondence principle \cite{bohr}). The questions of when and how exactly this transition occurs, in spite of being active for almost a hundred years now \cite{schrodinger}, are still debated today \cite{schlosshauer}.

\maketitle

One way to explain the emergence of classicality is to introduce genuine non-quantum effects (thus modifying quantum theory), such as the dynamical \cite{bassi} or gravitationally induced collapse \cite{penrose} of the wave function. Another standard way is to investigate the quantum-to-classical transition from within quantum theory (which is also the subject here). One of the most famous approaches in this direction is the decoherence mechanism \cite{zurek, zurek2003, schlosshauerbook}, which shows that macroscopic systems, being hard to isolate, lose coherence in their interaction with the environment. Consequently their description becomes effectively classical. Complementary to this approach is the coarse-graining mechanism \cite{kb, kb2010}, which shows that outcomes of macroscopic measurements admit a classical description when their resolution is limited even for perfectly isolated systems. While the main focus of the decoherence mechanism is on the dynamics, the coarse-graining approach focuses on the kinematic aspect of the transition to classicality. One way or another, the common conclusion is that quantum effects disappear in the macroscopic limit. Here, we want to challenge this view and ask \emph{how much decoherence or coarse-graining is needed to observe classicality?} Typically, one takes a ``transition'' parameter, such as the size of the system: for example, the number of microscopic constituents $N$ of a large system. Then, the standard statement is rather qualitative, positing that for a measurement resolution much greater than $\sqrt N$, one obtains an effective classical description \cite{koflerphd}. Nevertheless, the precise mathematical meaning of what ``much greater than'' means is vague in a concrete experimental situation. Our goal here is to make such statements mathematically precise and to show a well-defined macroscopic scale at which large quantum systems can fully preserve a quantum description, in the sense of the typical ingredients of quantum theory such as the notion of \emph{Hilbert space}, the \emph{Born rule} and the \emph{superposition principle}. Furthermore, we will show that these are not merely mathematical artefacts but genuine quantum phenomena by explicitly showing the violation of \emph{Bell} \cite{bell} and \emph{Leggett-Garg} \cite{leggettgarg} inequalities for such systems.

\section{Coarse-grained measurements and the quantum-to-classical transition}
The concept of coarse-grained measurement in quantum mechanics appeared already long ago as an attempt to address Born's rule using the relative frequency operator (fuzzy or coarse-grained observables \cite{hartle}). Continuing this line of inquiry, many subsequent works have followed, including discussions related to the weak and strong laws of large numbers \cite{hartle, farhigoldstone}, as well as the quantum-to-classical correspondence (see \cite{zurek, zurek2003} and references therein). A significant breakthrough in the operational understanding of the emergence of classicality through the coarse-graining mechanism came due to Kofler and Brukner \cite{kb, kb2010}. In their works, the authors focus on classicality via the notion of macroscopic realism \cite{leggettgarg} or Bell's local realism \cite{bell}. Such a framework opens an operational route to study the observability of genuine quantum effects, signalled by the violation of Leggett-Garg \cite{leggettgarg} or Bell \cite{bell} inequalities. Their main result is that with a sufficient level of coarse-graining (much greater than $\sqrt{N}$) the outcomes of the experiment can be modelled by a classical distribution (at least for finite dimensional systems) \cite{kb, kb2010}. In particular, outcomes of successive measurements on a single system satisfy macroscopic realism (i.e. satisfy all Leggett-Garg inequalities) \cite{kb}, and local measurements on a bipartite system satisfy local realism (i.e. satisfy all Bell inequalities) \cite{kb2010}. Similar results can be found in the context of more general (post-quantum) theories in \cite{ramanathan, barbosa}. On the other hand, if the level of coarse-graining is just right, namely of the order $\sqrt N$, then we have the following facts: for \emph{independent and identically distributed} (IID) pairs of (finite-dimensional) quantum systems, the coarse-grained quantum correlations satisfy Bell locality \cite{nw} (an analogous result can be shown for quantum contextuality \cite{hensonsainz}), while non-IID quantum states can exhibit nonlocal correlations as shown in our earlier works \cite{gd, gallego}. Furthermore, these works show that an entire family of (quantum) non-central limit theorems arises in such non-IID scenarios, raising an interesting situation in which, although being coarse-grained, macroscopic quantum systems exhibit quantum phenomena. We formalized this via the idea of \emph{macroscopic quantum behaviour} \cite{gd,gallego}, a property of a system in the macroscopic limit that retains the mathematical structure of quantum theory under the action of decoherence, particle losses, and coarse-graining. We have shown that it is possible to preserve the typical ingredients of the quantum formalism in the macroscopic limit, such as the Born rule, the superposition principle, and the incompatibility of the measurements. 

In this paper, we build on our previous work and develop a \emph{unified framework for quantum theory at the macroscopic scale}, which describes the theory of successive concatenation of measurements in the macroscopic limit. We use the formalism of Kraus operators in the limit Hilbert space, which allows us to show a violation of a Leggett-Garg inequality in the macroscopic limit, which in turn shows the incompatibility of coarse-grained measurements (at the $\sqrt N$ level) with a macrorealistic description of the large quantum system.

\subsubsection*{Decoherence vs. coarse-graining}
Before proceeding further, we make some remarks on decoherence and its relation to coarse-graining. The decoherence mechanism stresses the role of dynamical loss of quantum coherence due to the (instantaneous) interaction with the environment followed by the einselection process of the pointer basis \cite{zurek, zurek2003}. On the other hand, the coarse-graining mechanism focuses on a kinematical aspect of the transition to classicality by measuring collective coarse-grained observables.  Nevertheless, if decoherence is understood broadly as a mechanism of ``classicalization" due to interaction with the environment, then coarse-graining can be seen as an instance of such a mechanism. Namely, suppose such a process is described by an interaction Hamiltonian of the type $H=H_S\otimes H_E$, with $H_S=\sum_i h_i$ describing the collective Hamiltonian of the large quantum system (here $h_i$ refers to the operators associated with local, microscopic constituents). In that case, the decoherence model effectively reduces to the measurement model of collective coarse-grained observables. This will be precisely our model of measurement, which we will introduce in the next sections. Therefore, essential aspects of the standard decoherence mechanism are incorporated in our study through coarse-graining of the measurements, and this is the standard argument to draw the parallel between the two approaches to the quantum-to-classical transition \cite{caslav}. Notice that the system interacts collectively with the environment in that case. On the other hand, microscopic constituents can as well be independently subjected to a decoherence channel (such as the dephasing or depolarizing channel) \cite{frowisstable}. We will also include this mechanism in our study and refer to it as \emph{local decoherence} to distinguish it from the standard decoherence mechanism (a precise definition will be provided to avoid misunderstandings). Our aim is to show robustness of quantum phenomena in the macroscopic limit against both mechanisms.

The paper is organized as follows. In Section \ref{sec:setup}, we define the setting under consideration and specify all relevant assumptions. The notion of \emph{macroscopic quantum behaviour} is introduced in Section \ref{sec:mqb}, encompassing the quantum properties of systems in the macroscopic limit, with concrete examples provided. In Section \ref{sec:deviceindependent}, we further analyse the properties of the system, demonstrating the genuineness of macroscopic quantum behaviour through device-independent tests such as the explicit violations of Bell inequalities as well as Leggett-Garg inequalities. Finally, we conclude with final remarks and open questions in Section \ref{sec:outlook}.

\section{Setup}
\label{sec:setup}
We consider a macroscopic quantum measurement scenario analogous to the one presented in \cite{nw, gd, gallego} (see Figure \ref{fig:meas}). The setting consists of two parts: a quantum system $\mathsf{S}$ and a quantum measurement apparatus $\mathsf{M}$. To model a realistic situation in the absence of perfect control, we assume these satisfy certain assumptions. 

\subsubsection*{Macroscopic system}
First, we assume the system $\mathsf{S}$ satisfies the following conditions:
\begin{enumerate}[label=(\textit{\roman*})]
    \item \emph{Large $N$.} The system is composed of a large number $N$ of identical particles or subsystems with associated Hilbert space $\mathfrak{h}$. We describe the state of the system with a density matrix $\rho_N \in \mathcal{D}(\mathfrak{h}^{\otimes N})$, i.e. a positive, self-adjoint bounded linear operator on $\mathfrak{h}^{\otimes N}$ satisfying $\tr \, \rho_N =1$.
    \label{ass:largen}

    \item \emph{Local decoherence.} The system is subject to independent, single-particle decoherence channels $\Gamma$, such as the depolarizing or the dephasing channel. The effective state thus becomes $\Gamma^{\otimes N} ( \rho_N )$.
    \label{ass:decoh}

    \item \emph{Particle losses.} Each individual particle has a probability $p \in (0,1]$ of reaching the measurement apparatus, while $1-p$ is the probability of being lost (we assume $p > 0$ to avoid the trivial case where no particles reach the apparatus). Therefore, in each run of the experiment, only a number $M \leq N$ of particles reach the measurement apparatus with a probability $f_N(M) = {N \choose M} p^{M} (1-p)^{N-M}$. Consequently, the state received by the measurement apparatus is of the form $\tr_{1 \dots N-M} \r_N$ (where $\tr_i$ is understood as the partial trace over the Hilbert space $\mathfrak{h}$ of the $i$-th particle).
    \label{ass:loss}
\end{enumerate}
Given these assumptions, for an initial state of the system $\r_N \in \mathcal{D}(\mathfrak{h}^{\ox N})$, the effective state of the system can be written in Fock-like space $\mathcal{F}_N = \oplus_{M=0}^N \mathfrak{h}^{\ox M}$ as  
\begin{align}
    \L_N(\r_N) = \bigoplus_{M=0}^N  f_N(M) \,  \sum_{\pi \in \mathfrak{S}_N} \frac{\tr_{\pi(1) \dots \pi(N-M)} \Big[  \G^{\otimes N} ( \rho_N )  \Big] }{N!}  \, ,
\label{eq:LambdaN}
\end{align}
where $\mathfrak{S}_N$ is the symmetric (permutation) group of $N$ elements. 

\subsubsection*{Coarse-grained measurements}
To model the macroscopic measurements, we assume the measurement apparatus $\mathsf{M}$ satisfies the following conditions:
\begin{enumerate}[label=(\textit{\roman*})]
    \setcounter{enumi}{3}
    \item \emph{Collective measurement}. The measurement setting of the apparatus is given by a single-particle observable, i.e. a Hermitian bounded operator $A \in \mathcal{B}(\mathfrak{h})$. We denote by $A \ket{a}_A = a \ket{a}_A$ the diagonalization of $A$, where $a \in \mathbb{R}$ are its eigenvalues and $\ket{a}_A \in \mathfrak{h}$ are its eigenstates. We denote by $\mathcal{A} \subseteq \mathcal{B}(\mathfrak{h})$ the set of (experimentally) accessible single-particle observables.
    \label{ass:coll}
    
    \item \emph{Intensity measurement.} Given a measurement setting $A \in \mathcal{A}$, the measurement apparatus measures the intensity $\sum_{i} a_i$, i.e. the sum of individual outcomes. The corresponding observable in Fock-like space is therefore the intensity observable $I_N(A) = \oplus_{M=1}^N \sum_{i=1}^M A_i$, where $A_i= \mathbb{I} \otimes\dots  \otimes A\otimes  \dots \otimes \mathbb{I}$ is the operator that acts with $A$ on the $i$-th particle and with the identity on the rest. 
    \label{ass:intensity}
    
    \item \emph{Coarse-graining.} The measuring scale for the intensity $\sum_{i} a_i$ has a limited resolution of the order of $\sqrt{N}$ (the square-root of the total number of particles), meaning that it cannot distinguish between values that differ by approximately less than $\sqrt{N}$. 
    \label{ass:cg}
\end{enumerate}
\begin{figure}
        \centering
        \includegraphics[width=0.45\textwidth]{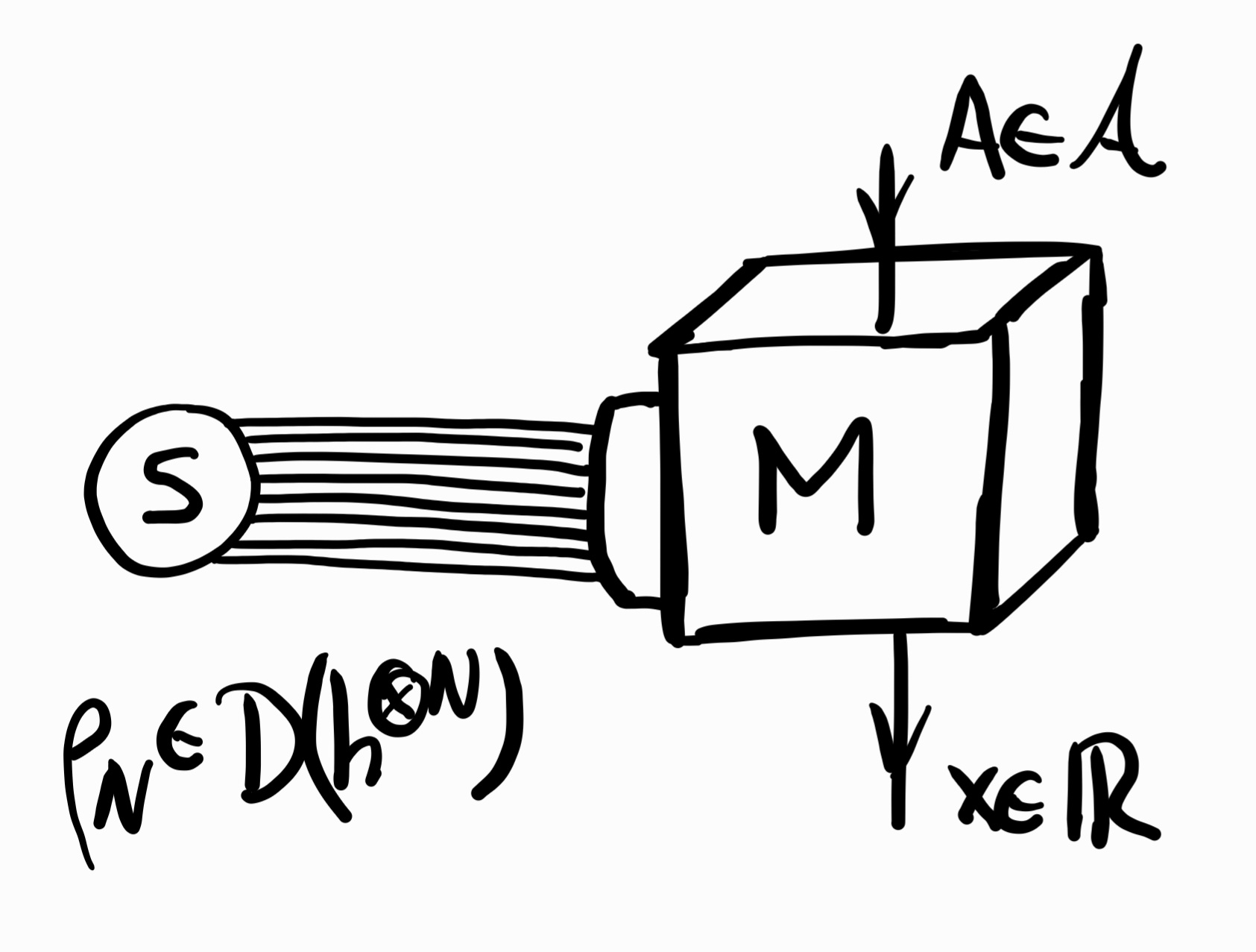}
        \caption{\textbf{Macroscopic measurement.} A macroscopic quantum system $\mathsf{S}$ is sent into a macroscopic quantum measurement apparatus $\mathsf{M}$, which implements a coarse-grained measurement.}
        \label{fig:meas}
\end{figure}
\subsubsection*{Measurement model}
In order to implement these assumptions explicitly, we follow von Neumann's model of quantum measurements \cite{vn}. First, we assume that the measurement apparatus $\mathsf{M}$ couples the system $\mathsf{S}$ to an auxiliary system $\mathsf{P}$ called the pointer, initially in a state $\ket{\Phi} \in L^2(\mathbb{R})$ centered around zero in position basis and with a standard deviation of order $\sqrt{N}$. For simplicity, we take this state to be a Gaussian with standard deviation $\s \sqrt{N}$ for some $\s > 0$, i.e.
\begin{align}
    \Phi_N(x) = \frac{1}{(2 \pi N \s^2)^{1/4}} e^{-x^2/(4 N  \s^2)} \, .
\label{eq:gaussian}
\end{align}
The coupling between the system and the pointer is described by the Hamiltonian $H(t) = \g(t) \oplus_{M=0}^N \sum_{i=1}^M A_i \ox P$, where $\g(t)$ is a nonzero function only for a short time satisfying $\int dt \, \g(t) =1$ and $P$ is the momentum operator of the pointer. After unitary interaction, the position of the pointer is translated to a distance equal to the value of the system's intensity $\sum_{i} a_i$ (i.e., an eigenvalue of the intensity operator $I_N(A)$). Finally, the pointer's position is measured, obtaining a value $x \in \mathbb{R}$. For a single-particle observable $A \in \mathcal{A}$ (which represents a measurement setting) and an initial state of the system $\r_N \in \mathcal{D}(\mathfrak{h}^{\ox N})$, we set $X_N$ to be the random variable associated to the measured value $x$ (the measurement result on $\mathsf{P}$). We shall use the symbol ``$\sim$" to denote ``distributed according to", and as shown in the Appendix A of the supplementary material, we have $X_N \sim P(x|A)$ where
\begin{align}
    P_N(x |A) = \tr \Big[ K_N(x| A) \, \L(\r_N) \,  K_N^\dagger(x| A) \Big].
\label{eq:distributionx}
\end{align}
Here, the Kraus operators $K_N(x|A)$ are given by
\begin{align}
    \bigoplus_{M=0}^N \sum_{a_1 \dots a_M} \Phi_N \left( x - \sum_{i=1}^M a_i \right) \, \Pi_{a_1 | A}  \otimes \dots \otimes \Pi_{a_M | A}\, ,
\label{eq:kraus}
\end{align}
with $\Pi_{a|A}=\ket{a}_A \, _A\bra{a}$ being the eigenprojectors of $A$. These Kraus operators define a \emph{positive operator-valued measure} (POVM) with elements $E_N (x | A) = K_N^\dagger (x|A) K_N (x|A) $, normalized so that $\int dx \, E_N(x|A) = \mathbb{I}$. The corresponding (normalized) post-measurement state of the system is given by the standard expression
\begin{align}
    \frac{K_N (x | A ) \, \L(\r_N) \, K_N^\dagger(x | A ) }{\tr \big[ K_N(x | A) \, \L(\r_N) \, K_N^\dagger(x | A) \big] }\, ,
    \label{eq:rhon(x,s)}
\end{align}
To summarize, the effective state $\L(\r_N)$ describes the system $\mathsf{S}$ under assumptions \ref{ass:largen}, \ref{ass:decoh} and \ref{ass:loss}, while the Kraus operators $K_N(x|A)$ describe the measurement apparatus $\mathsf{M}$ under assumptions 
\ref{ass:coll}, \ref{ass:intensity} and \ref{ass:cg}.

\subsubsection*{Macroscopic limit}
The macroscopic limit corresponds to the limit of an infinite number of particles, i.e., $N\rightarrow\infty$. Nevertheless, before proceeding further, let us consider possible scenarios in such a limit. The random variable $X_N$ will generally not converge in distribution as $N \to \infty$. To illustrate this, let the initial state of the system be an \emph{independent and identically distributed} (IID) state $\r_N = \r^{\ox N}$ for some $\r \in \mathcal{D}(\mathfrak{h})$. In this case, the distribution of the intensity $\sum_{i=1} a_i$ does not converge in general, unless one subtracts to it the mean value $\langle \sum_{i=1} a_i \rangle$ and divides by $\sqrt{N}$, just like in the central limit theorem \cite{billingsley}. Therefore, to ensure convergence, we take an affine transformation of $X_N$, namely we consider a family of random variables of the form $ \l_N X_N + \m_N$, and choose the parameters $\l_N, \m_N \in \mathbb{R}$ (independent of the measurement setting and initial state of the system) in a way such that $\l_N X_N + \m_N$ converges in distribution. Once this is fixed, the corresponding probability density function and the Kraus operators given in Equations \eqref{eq:distributionx} and \eqref{eq:kraus} transform as
\begin{align*}
    P_N(x|A) &  \mapsto \l_N^{-1}  \,  P_N \big( \l_N^{-1}(x-\m_N) |A \big) \, ,
\end{align*}
and
\begin{align*}
    K_N(x|A) & \mapsto \l_N^{-1/2} \, K_N \big( \l_N^{-1}(x-\m_N) | A \big) \, ,
\end{align*}
respectively.
\begin{figure}
        \centering
        \includegraphics[width=0.45\textwidth]{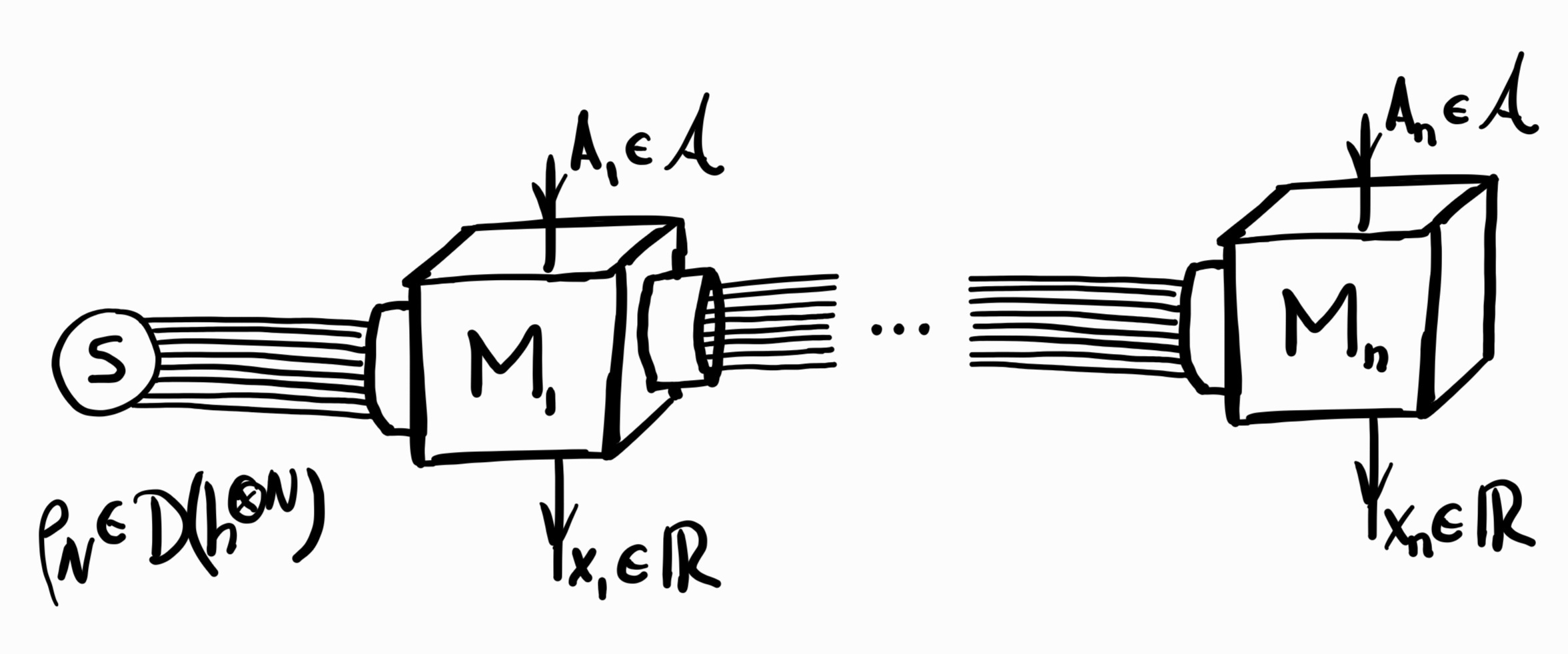}
        \caption{\textbf{$n$ consecutive macroscopic measurements.} A macroscopic quantum system $\mathsf{S}$ is successively sent into measurement apparatuses $\mathsf{M}_1, \dots, \mathsf{M}_n$ which implement corresponding coarse-grained measurements.}
        \label{fig:nmeas}
\end{figure}

\subsubsection*{$n$ consecutive measurements}

Finally, we are ready to present the most general scenario where a number $n$ of successive measurements are performed (see Figure \ref{fig:nmeas}). In this case, the system $\mathsf{S}$, subjected to assumptions \ref{ass:largen} - \ref{ass:loss} as before, goes through measurement apparatuses $\mathsf{M}_1$, ..., $\mathsf{M}_n$, each of which satisfies assumptions \ref{ass:coll} - \ref{ass:cg}. For an initial state of the system $\r_N \in \mathcal{D}(\mathfrak{h}^{\ox N})$ and a sequence of $n$ single-particle observables $(A_1, \dots , A_n)$ ($n$ measurement settings) let $\vec{X}_N = (X_{N}^{(1)}, \dots, X_{N}^{(n)})$ be the random vector associated to the measurement outcomes $(x_1, \dots, x_n)$. Then $\vec{X}_N \sim P(x_1, \dots, x_n | A_1, \dots, A_n)$, where the distribution $P(x_1, \dots, x_n | A_1, \dots, A_n)$ is given by
\begin{align}
     \tr  \bigg[ K_N^{(n)} \L^{(n)} \Big( \dots K_N^{(1)} \, \L^{(1)}(\r_N) \,  K_N^{(1) \dagger} \dots \Big)   K_N^{(n) \dagger} \bigg] \, .
\label{eq:distributionvec}
\end{align}
Here $K_N^{(j)} = K_N(x_j |A_j)$ and $\L^{(j)}$ is the map \eqref{eq:LambdaN} extended to Fock-like space (defined by loss probability $p_j$ and decoherence channel $\G_j$), i.e.,
\begin{align*}
    \L^{(j)} \left( \bigoplus_{M=0}^N \, \r_M  \right) = \bigoplus_{M=0}^N \L_M^{(j)} ( \r_M) \, ,
\end{align*}
where
\begin{align}
    \L_M^{(j)} (\r_M) = \bigoplus_{J=0}^M f_M^{(j)}(J) \sum_{\pi \in \mathfrak{S}_M} \frac{\tr_{\pi(1) \dots \pi(M-J)} [ \G_j^{\ox M} (\r_M)]}{M!} 
\label{eq:Lambda}
\end{align}
and $f_M^{(j)}(J) = {M \choose J}p_j^J (1-p_j)^{M-J}$. As before, the random vector $\vec{X}_N$ may not converge in distribution; thus, we shall consider an affine transformation of the form $\vec{\l}_N \odot \vec{X}_N + \vec{\m}_N$, where $\odot$ denotes the Kronecker product (e.g. $(a,b) \odot (c,d) = (ac, bd)$), and choose the vectors $\vec{\l}_N$ and $\vec{\m}_N$ to facilitate convergence.

\section{Quantum theory at the macroscopic scale}
\label{sec:mqb}
We now argue that, in the context of the above scenario, it is possible to define a joint notion of convergence for states and measurements that \emph{preserves the complete mathematical structure of quantum theory in the macroscopic limit}. Furthermore, we will show how this formalism specifically applies to device-independent quantities such as \emph{correlations} (both spatial, as in Bell experiments, and temporal, as in Leggett-Garg experiments). In other words, we will show that, in the limit $N \to \infty$, states $\r_N \in \mathcal{D}(\mathfrak{h}^{\ox N})$ can be mapped to states $\rho \in \mathcal{D}(\mathfrak{H})$ in some ``limit" Hilbert space $\mathfrak{H}$ and Kraus operators $K_N(x|A)$ can be mapped to Kraus operators $K(x|A)$ acting on $\mathfrak{H}$ in a way such that the essential ingredients of quantum mechanics, including the Born rule, the superposition principle and the incompatibility of measurements, are retained. To do this, let us introduce the concepts of \emph{macroscopic quantum representation} and \emph{(robust) macroscopic quantum behaviour of order} $n$ (MQB$_n$).

\begin{definition}
    Given a closed subspace $\mathfrak{H}_N \subseteq \mathfrak{h}^{\otimes N}$ and a subset $\mathcal{A} \subseteq \mathcal{B}( \mathfrak{h} )$, a \emph{macroscopic quantum representation} is a limit of the form
    \begin{align*}
        \Big( \mathfrak{H}_N, \r_N, K_N(x|A) \Big) \ton  \Big( \mathfrak{H}, \r, K(x|A) \Big)
    \end{align*}
    for all $\r_N \in \mathcal{D}(\mathfrak{H}_N)$ and for all $A \in \mathcal{A}$, where
    \begin{itemize}
        \item  $\mathfrak{H}$ is a Hilbert space;
        \item the limit state $\r  \in \mathcal{D}$ is a ``linear" function of $\r_N$, in the sense that if the pure states $\ket{\y}_N$ and $\ket{\f}_N$ are mapped, respectively, to $\ket{\y}$ and $\ket{\f}$, then the linear combination $\a \ket{\y}_N + \b \ket{\f}_N$ is mapped to the linear combination $\a \ket{\y} + \b \ket{\f}$;
        \item the limit Kraus operators $K( x | A)  \in \mathcal{B}(\mathfrak{H})$ form a non-compatible set of measurements. 
    \end{itemize}
\end{definition}

\begin{definition}
    A closed subspace $\mathfrak{H}_N \subseteq \mathfrak{h}^{\otimes N}$ and a subset $\mathcal{A} \subseteq \mathcal{B}( \mathfrak{h} )$ possess \emph{MQB$_n$} for some $n \in \mathbb{N}$ if there exists a macroscopic quantum representation
    \begin{align*}
        \Big( \mathfrak{H}_N, \r_N, K_N(x|A) \Big) \to  \Big( \mathfrak{H}, \r, K(x|A) \Big)
    \end{align*}
    such that for every state $\rho_N \in \mathcal{D} (\mathfrak{H}_N)$ and for every sequence of measurement settings $(A_1, \dots, A_n ) \in \mathcal{A}^{\ox n}$, the random vector 
    \begin{align*}
        \vec{X}_N \sim \tr  \bigg[ K_N^{(n)}  \dots K_N^{(1)} \, \r_N \,  K_N^{(1) \dagger} \dots K_N^{(n) \dagger} \bigg] 
     \end{align*}
     (or an affine transformation thereof, i.e. $\vec{\l}_N \odot \vec{X}_N + \vec{\m}_N$ for some suitably chosen $\vec{\l}_N, \vec{\m}_N \in \mathbb{R}^n$), where $K_N^{(j)} = K_N(x_j |A_j)$, converges in distribution as $N \to \infty$ to some random vector
    \begin{align*}
         \vec{X} \sim \tr \big[  K^{(n)} \dots K^{(1)} \,  \rho \,  K^{(1) \dagger} \dots K^{(n) \dagger} \big] \, ,
    \end{align*}  
    where $K^{(j)} = K(x_j | A_j)$ (as given by the macroscopic quantum representation).
\end{definition}

In references \cite{gd, gallego}, only the case of a single measurement was considered ($n=1$), and the definition of ``MQB" given there corresponds to MQB$_1$ as defined here. Now we define a stronger notion, where the macroscopic quantum representation is robust against decoherence and losses as introduced in assumptions \ref{ass:decoh} and \ref{ass:loss} respectively.

\begin{definition}
    A closed subspace $\mathfrak{H}_N \subseteq \mathfrak{h}^{\otimes N}$ and a subset $\mathcal{A} \subseteq \mathcal{B}( \mathfrak{h} )$ possess {\bf \emph{robust}} MQB$_n$ for some $n \in \mathbb{N}$ if there exist a macroscopic quantum representation
    \begin{align*}
        \Big( \mathfrak{H}_N, \r_N, K_N(x|A) \Big) \to  \Big( \mathfrak{H}, \r, K(x|A) \Big)
    \end{align*}
    and $\epsilon > 0$ such that for every state $\rho_N \in \mathcal{D} (\mathfrak{H}_N)$, for every sequence of measurement settings $(A_1, \dots, A_n ) \in \mathcal{A}^{\ox n}$ and for every sequence of channels $(\L^{(1)}, \dots, \L^{(n)} )$ of the form \eqref{eq:Lambda} with $ \max_j \| \L^{(j)} - \rm{Id} \| \leq \epsilon$, the random vector 
    \begin{align*}
        \vec{X}_N \sim \tr  \bigg[ K_N^{(n)} \L^{(n)} \Big( \dots K_N^{(1)} \, \L^{(1)}(\r_N) \,  K_N^{(1) \dagger} \dots \Big) K_N^{(n) \dagger} \bigg] \, ,
     \end{align*}
     (or an affine transformation $\vec{\l}_N \odot \vec{X}_N + \vec{\m}_N$ for some suitably chosen $\vec{\l}_N, \vec{\m}_N \in \mathbb{R}^n$), where $K_N^{(j)} = K_N(x_j |A_j)$, converges in distribution as $N \to \infty$ to some random vector
    \begin{align*}
         \vec{X} \sim \tr \big[  K^{(n)} \dots K^{(1)} \,  \rho \,  K^{(1) \dagger} \dots K^{(n) \dagger} \big] \, ,
    \end{align*}  
    where $K^{(j)} = K(x_j | A_j)$ (as given by the macroscopic quantum representation).
\end{definition}

\subsubsection*{An example}
To illustrate these ideas, consider the case where $\mathfrak{h} = \mathbb{C}^2$ (thus, particles or subsystems are qubits). In this case, the single-particle projective measurements have two possible outcomes, which we label $+1$ and $-1$. Let us define the $N$-particle Dicke states \cite{dicke}
\begin{align*}
\ket{N,k} := \frac{1}{\sqrt{{N \choose k}}} \Big( \ket{ \underbrace{1\dots1}_{k} 0\dots0 } + \text{permutations} \Big) \, ,
\end{align*}
the subspace of $(\mathbb{C}^2)^{\ox N}$ generated by the first $d$ Dicke states 
\begin{align*}
    \mathfrak{D}_N := \Span \{ \ket{N,k} \, , \, k = 0 , 1 , \dots, d-1  \} 
\end{align*}
and the set of non-diagonal observables on $\mathbb{C}^2$
\begin{align*}
    \mathcal{ND} := \{ A \in \mathcal{B}(\mathbb{C}^2) : A^\dagger = A \quad \text{and} \quad \langle 0 | A | 1 \rangle \neq 0 \} \, .
\end{align*}
Then we have the following result:
\begin{theorem}
    The Dicke subspace $\mathfrak{D}_N$ with dimension $d \ll N$ (in the sense that $\lim_{N \to \infty} d/ N = 0$, which holds, for instance, if $d$ is fixed) and the set $\mathcal{ND}$ possess robust MQB$_1$.
\label{thm:mqb1}
\end{theorem}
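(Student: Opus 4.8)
The plan is to realise the limit as a single bosonic mode via a Holstein--Primakoff expansion of the collective spin around $\ket{N,0}$, and then to show that the operator-norm condition on the perturbing channels makes them invisible at the $\sqrt{N}$ scale. Concretely, I would take $\mathfrak{H}=\ell^2(\mathbb{N})$, the Fock space of one mode with ladder operators $a,a^\dagger$, and the linear isometry $\ket{N,k}\mapsto\ket{k}$ for $0\le k\le d-1$; this meets the linearity requirement of a macroscopic quantum representation and maps each $\r_N\in\mathcal{D}(\mathfrak{D}_N)$ to a state $\r$ on $\mathfrak{H}$. For $A\in\mathcal{ND}$ --- which, subtracting a multiple of $\mathbb{I}$ that only shifts the centering, we may take traceless, $A=\vec{c}\cdot\vec{\sigma}$ --- put $w:=\mel{1}{A}{0}=c_x+ic_y\neq 0$, use the affine transformation $(X_N-N\mel{0}{A}{0})/\sqrt{N}$, and take the limit Kraus operators
\[
K(x|A)=(2\pi\s^2)^{-1/4}\exp\!\Big(-\tfrac{(x\,\mathbb{I}-\hat{q}_A)^2}{4\s^2}\Big),\qquad \hat{q}_A:=w\,a^\dagger+\bar{w}\,a ,
\]
i.e.\ the $\s$-unsharp measurement of the quadrature $\hat{q}_A$; by functional calculus $\int dx\,K(x|A)^\dagger K(x|A)=\mathbb{I}$, so this is a POVM.

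First I would establish ordinary MQB$_1$, along the lines of \cite{gd,gallego}. By the measurement model of Section~\ref{sec:setup} the pointer reading splits as $X_N=L_N+G_N$, with $G_N\sim\mathcal{N}(0,N\s^2)$ independent of $L_N$ and $L_N$ the outcome of the collective measurement of $I_N(A)$; hence the rescaled density is the law of $(L_N-N\mel{0}{A}{0})/\sqrt{N}$ convolved with $\mathcal{N}(0,\s^2)$, and it suffices to control $L_N$. On $\mathfrak{D}_N$ one has $J_-\ket{N,k}=\sqrt{(k+1)(N-k)}\,\ket{N,k+1}$ and $J_z\ket{N,k}=(N/2-k)\ket{N,k}$, so between fixed Dicke states $N^{-1/2}J_-\to a^\dagger$, $N^{-1/2}J_+\to a$ and $N^{-1/2}(N/2-J_z)\to 0$, and therefore $N^{-1/2}(I_N(A)-N\mel{0}{A}{0})\to\hat{q}_A$ in the sense of matrix elements. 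Expanding the $r$-th power into finitely many monomials, each differing from its Holstein--Primakoff counterpart by $O(N^{-1/2})$, all moments of $(L_N-N\mel{0}{A}{0})/\sqrt{N}$ converge to those of the spectral law of $\hat{q}_A$ in $\r$; that law has sub-Gaussian tails, hence is moment-determinate, so convergence in distribution follows, and after the above convolution $\vec{X}_N$ converges in distribution to $\vec{X}\sim\tr[K(x|A)\,\r\,K(x|A)^\dagger]$.

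For incompatibility I would take $A_1=R\sigma_x,\ A_2=R\sigma_y\in\mathcal{ND}$: then $[\hat{q}_{A_1},\hat{q}_{A_2}]=2iR^2$, so up to the allowed affine rescaling the two limit measurements are $\hat{Q}$ and $\hat{P}$ smeared by independent Gaussian noise of variance $\s^2/(2R^2)$ each, and for $R$ large this product of added variances drops below the Arthurs--Kelly joint-measurability threshold for conjugate quadratures; hence $\{K(x|A)\}_{A\in\mathcal{ND}}$ is non-compatible and $(\mathfrak{H},\r,K(x|A))$ is a macroscopic quantum representation. For robustness I would first observe that the uniform-in-$N$ bound $\|\L^{(1)}-\mathrm{Id}\|\le\e$ is severe: a channel of the form \eqref{eq:Lambda} with loss probability $1-p_N$ leaves only a weight $p_N^N$ of an $N$-particle pure state in the top sector, and the decoherence acts on all $N$ particles, so $\|\L^{(1)}-\mathrm{Id}\|\le\e$ forces the per-particle loss and decoherence strengths to be $O(1/N)$ --- only a bounded expected number $C_\e$ of particles is lost or decohered. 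Choosing $\e$ small, such channels then perturb the moments of $(L_N-N\mel{0}{A}{0})/\sqrt{N}$ by $o(1)$: losing a bounded expected number of particles turns $\ket{N,k}$ into a hypergeometric mixture of $\ket{M,j}$ with $M=N-O(1)$, whose image still converges to $\ket{k}$; decohering a bounded expected number of particles shifts the one- and two-particle marginals by $O(1/N)$, and the naive $O(1/N)$ corrections to the variance and higher cumulants of $I_N(A)/\sqrt{N}$ cancel because a covariance enters as $\avg{A\ox A}-\avg{A}\avg{A}$ with both terms carrying the same contraction factor. Thus the limit is unchanged and robust MQB$_1$ holds.

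The hard part will be this last estimate: one must prove the moment (or cumulant) stability uniformly over all $\r_N\in\mathcal{D}(\mathfrak{D}_N)$ and all admissible $(\G_N,p_N)$, keeping precise track of which perturbations survive division by $\sqrt{N}$ --- the cancellation of the naive leading-order corrections is exactly what makes robustness possible and is easy to misjudge. The remaining points (the boundary behaviour at $k=d-1$ when applying $I_N(A)$, and upgrading moment convergence to convergence in distribution if $d$ is allowed to grow with $N$) are routine by comparison.
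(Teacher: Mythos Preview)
Your Holstein--Primakoff/moment approach to the channel-free part is a legitimate alternative to the paper's route via characteristic functions and L\'evy's continuity theorem; both ultimately identify the limit as a single bosonic mode with Gaussian-smeared quadrature measurements, and your argument via moment convergence plus moment-determinacy would work once the uniformity in $k,l\le d-1$ is made precise.

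The genuine gap is in your treatment of robustness. You argue that $\|\L^{(1)}-\mathrm{Id}\|\le\e$ on the full $N$-particle channel forces the per-particle loss and decoherence strengths to be $O(1/N)$, so that the channel becomes invisible at the $\sqrt{N}$ scale and the limit Kraus operators have width $\s$ regardless. But this is not what the paper proves, and not what ``robust'' means here: Appendix~B carries a \emph{fixed}, $N$-independent loss probability $p\in(0,1]$ and a fixed single-particle decoherence channel $\G$ through the entire computation. The characteristic function becomes $e^{-N\s^2t^2/2}\,\tr\big[\r_N\,\G^\dagger(1-p+p\,e^{itA})^{\ox N}\big]$, the Dicke matrix elements $\mel{N,k}{\mathcal{G}^{\ox N}}{N,l}$ are expanded combinatorially, and the limit Kraus operators acquire a width $\b$ and phase $\j$ that \emph{depend on} $(p,\G)$ via $\b^2=\big(\s^2+p\mel{0}{\G^\dagger(A^2)}{0}-p^2\mel{0}{\G^\dagger(A)}{0}^2\big)/\big(p^2|\mel{0}{\G^\dagger(A)}{1}|^2\big)-1$. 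The role of $\e$ is only to keep $\mel{0}{\G^\dagger(A)}{1}\ne 0$ so that the rescaling is well-defined. Your limit with $\b=\s$ recovers only the special case $p=1$, $\G=\mathrm{Id}$, and your ``cancellation of naive $O(1/N)$ corrections'' is both unsubstantiated (an $O(1/N)$ shift in the two-particle covariance multiplies $\sim N^2$ pairs and survives the division by $N$) and irrelevant to what actually needs to be shown. To close the gap you must redo the limit computation with $(p,\G)$ held fixed and exhibit the channel-dependent limit.
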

\begin{proof}
    See Appendix B in the supplementary material.
\end{proof}
In particular, as shown in the Appendix B of the supplementary material, the MQB$_1$ is given by the macroscopic quantum representation
    \begin{align*}
        \Big( \mathfrak{D}_N, \ket{N,k}, K_N(x|A) \Big) \to  \Big( L^2(\mathbb{R}), \ket{k}, K(x|A) \Big) \, .
    \end{align*}
    Here, $L^2(\mathbb{R})$ is the space of square-integrable functions, $\ket{k}$ are number states (energy eigenstates of the quantum harmonic oscillator) and
    \begin{align}
        K(x|A) =  \frac{e^{-(X_\j-x)^2/(2 \b^2)}}{(\pi \b^2)^{1/4}} \, ,
    \label{eq:limitkraus}
    \end{align}
    where $X_\j = X \cos \j + P \sin \j$ are phase space observable in terms of position $X$ and momentum $P$ observables and angle $\j = \arg \langle 0 | A | 1 \rangle$ and
    \begin{align}
        \b^2 = \frac{\s^2 + p \mel{0}{\G^\dagger(A^2)}{0} - p^2 \mel{0}{\G^\dagger(A)}{0}^2}{p^2 |\mel{0}{\Gamma^\dagger(A)}{1} |^2} - 1 \, ,
    \label{eq:beta}
    \end{align}
    in terms of the probability $p$ and decoherence channel $\G$ defined in Eq. \eqref{eq:LambdaN}. We conjecture that this macroscopic quantum representation  for the considered system constitutes a robust MQB$_n$ for all $n$:
\begin{conjecture}
    The Dicke subspace $\mathfrak{D}_N$ (with dimension $d \ll N$) and the set $\mathcal{ND}$ possess robust MQB$_n$ for all $n \in \mathbb{N}$.
\label{conj}
\end{conjecture}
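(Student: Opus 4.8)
The plan is to prove the conjecture by induction on $n$, using Theorem~\ref{thm:mqb1} (robust MQB$_1$) as the base case and upgrading it into a reusable ``single-step'' lemma that can be iterated. The conceptual picture that should make this work is the one already visible in \eqref{eq:limitkraus}--\eqref{eq:beta}: in the macroscopic limit the loss/decoherence channel $\L^{(j)}$ does not act at all on the limit space $L^2(\mathbb{R})$ --- its only effect is absorbed into the width $\b_j$ of the limit Kraus operator $K^{(j)}(x|A)$ --- so the limiting sequential theory is simply ``prepare $\r \in \mathcal{D}(L^2(\mathbb{R}))$, then apply the Gaussian quadrature filters $K^{(1)}(x_1|A_1), \dots, K^{(n)}(x_n|A_n)$ in succession.'' The obstacle to turning this picture into a proof is that the finite-$N$ post-measurement state after $j$ rounds no longer lives in the Dicke subspace $\mathfrak{D}_N$ (nor even in a symmetric subspace of fixed excitation number): the loss maps it into the Fock-like space $\mathcal{F}_N$, and local decoherence leaks weight into non-symmetric permutation irreps and raises the typical excitation number by an amount of order $\e N$. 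One therefore needs a stability statement: the class of $N$-indexed states for which the $n=1$ analysis applies must be enlarged to a class $\mathcal{G}$ that is closed under ``apply $\L^{(j)}$, apply $K_N^{(j)}(x|A)$, rescale.''

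Concretely, I would first isolate the single-step lemma. Let $\mathcal{G}$ denote the set of sequences $(\s_N)_N$, with $\s_N$ a sub-normalized permutation-invariant positive operator on $\mathcal{F}_N$, such that the affine rescaling of $\s_N$ converges in trace norm to some $\s \in \mathcal{D}(L^2(\mathbb{R}))$ and satisfies a uniform-in-$N$ tightness/energy bound (e.g.\ $\tr[(X^2+P^2)\,\s_N^{\mathrm{resc}}]$ bounded, with $X,P$ the limiting quadratures furnished by the embedding of $\mathfrak{D}_N$ into $L^2(\mathbb{R})$ underlying Theorem~\ref{thm:mqb1}). The lemma to prove is: for every $(\s_N)_N \in \mathcal{G}$ with limit $\s$, every $A \in \mathcal{ND}$, and every channel $\L$ of the form \eqref{eq:Lambda} with $\|\L - \mathrm{Id}\| \le \e$, the rescaled sub-normalized operator-valued map $x \mapsto K_N(x|A)\,\L(\s_N)\,K_N(x|A)^\dagger$ (i) converges, for each $x$ and uniformly for $x$ in compacts, in trace norm to $x \mapsto K(x|A)\,\s\,K(x|A)^\dagger$ with $K$ as in \eqref{eq:limitkraus}; (ii) is dominated in trace norm, uniformly in $N$, by a single integrable function $g(x)$; and (iii) again defines an element of $\mathcal{G}$, with limit $x \mapsto K(x|A)\,\s\,K(x|A)^\dagger$ and with the tightness bound uniform in $x$ on compacts. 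Part (i) is essentially the content of Appendix~B extended from the pure inputs $\ket{N,k}$ to general elements of $\mathcal{G}$ --- legitimate because the macroscopic quantum representation is linear in the state and the number states are total in $L^2(\mathbb{R})$, so one reduces to finitely many Dicke matrix elements and controls the tail via the tightness bound. Parts (ii) and (iii) require explicit Gaussian estimates on the Kraus elements \eqref{eq:kraus} together with a second-moment computation showing that the post-measurement state's ``energy'' stays bounded (a Gaussian quadrature filter applied to a finite-energy bosonic state again has finite energy; the $\e N$ decoherence excitations contribute only an $O(\e)$ additive shift to the limiting second moment).

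With the single-step lemma in hand the induction is routine: $\ket{N,k} \rightsquigarrow \ket{k}$ and, more generally, any $\r_N \in \mathcal{D}(\mathfrak{D}_N)$ gives a sequence in $\mathcal{G}$ converging to the corresponding $\r \in \mathcal{D}(L^2(\mathbb{R}))$ (this uses $d \ll N$ and is already implicit in Theorem~\ref{thm:mqb1}); then $j$ applications of the lemma show that the rescaled sub-normalized state after $j$ rounds converges, uniformly on compact $x_{\le j}$ and with an $N$-independent integrable majorant, to $K^{(j)}\cdots K^{(1)}\,\r\,K^{(1)\dagger}\cdots K^{(j)\dagger}$. Taking the trace after the $n$-th round and invoking dominated convergence via the majorant yields pointwise-a.e.\ and $L^1$ convergence of $P_N(x_1,\dots,x_n|A_1,\dots,A_n)$ to $\tr[K^{(n)}\cdots K^{(1)}\,\r\,K^{(1)\dagger}\cdots K^{(n)\dagger}]$, hence convergence of $\vec{X}_N$ in distribution to $\vec{X}$; the non-compatibility of $\{K(x|A) : A \in \mathcal{ND}\}$ is inherited verbatim from MQB$_1$ (distinct $A$ produce distinct quadrature angles $\j = \arg\langle 0|A|1\rangle$, hence non-commuting $X_\j$), so the representation is genuine and we obtain robust MQB$_n$ for every $n$.

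I expect the genuinely hard part to be part (iii) of the lemma: propagating the tightness/energy bound through arbitrarily many rounds of loss $+$ decoherence $+$ Gaussian filtering while keeping all estimates uniform in $N$ and in the intermediate outcomes. In particular one must rule out a slow accumulation, over the $n$ rounds, of decoherence-induced excitations or of weight in non-symmetric permutation sectors that would survive the $\sqrt{N}$ rescaling and corrupt the limit --- plausibly controlled because $\e$ is small and $n$ is fixed, so the total added noise variance is $O(n\e)$, but making this rigorous at the level of the full $\mathcal{F}_N$ state rather than just collective moments is the crux. A possible alternative route --- computing the limiting joint characteristic function $\mathbb{E}\,e^{i\sum_j t_j(\l_N^{(j)} X_N^{(j)} + \m_N^{(j)})}$ directly --- seems less tractable since the nested Kraus structure does not factorize, so I would retain it only as a consistency check on the final answer.
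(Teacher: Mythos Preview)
The statement you are attempting to prove is labelled \emph{Conjecture} in the paper and is explicitly left open there; the authors prove only the two weaker results that bracket it (robust MQB$_1$, Theorem~\ref{thm:mqb1}, and non-robust MQB$_n$, Theorem~\ref{thm:mqbn}). There is therefore no ``paper's own proof'' to compare your proposal against.

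As a research outline your strategy is reasonable and you have correctly located the obstruction: what you call part~(iii) of the single-step lemma --- closure of the state class $\mathcal{G}$ under one round of loss $+$ local decoherence $+$ Gaussian Kraus filtering, with uniform-in-$N$ energy/tightness control --- is precisely the missing ingredient that prevents the authors from upgrading Theorem~\ref{thm:mqbn} to the robust version. Your proposal does not supply this ingredient; it only names it. In particular, the assertion that ``the $\e N$ decoherence excitations contribute only an $O(\e)$ additive shift to the limiting second moment'' is the whole conjecture in disguise: local decoherence sends $\ket{N,k}$ into a mixture supported on \emph{all} permutation irreps, not just the symmetric one, and the coarse-grained Kraus operators \eqref{eq:kraus} act nontrivially on those sectors. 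Showing that the non-symmetric weight either vanishes after rescaling or contributes only a Gaussian broadening (as it does in the $n=1$ case, where it can be absorbed into the characteristic function before any state update) is exactly what has not been done for $n\ge 2$.

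It is worth noting that the paper's proof of the non-robust Theorem~\ref{thm:mqbn} does \emph{not} proceed via your trace-norm/state-update route at all: it works entirely at the level of characteristic functions, reduces the problem to an identity between Dicke matrix elements $\langle N,k|(\mathcal{A}_1\cdots\mathcal{A}_{n+1})^{\otimes N}|N,l\rangle$ and the product $\sum_m \langle N,k|(\mathcal{A}_1\cdots\mathcal{A}_n)^{\otimes N}|N,m\rangle\langle N,m|\mathcal{A}_{n+1}^{\otimes N}|N,l\rangle$, and then proves that identity via a Laguerre-polynomial lemma. That approach sidesteps the need to control the full post-measurement state in $\mathcal{F}_N$, but it relies on the single-site operator $\mathcal{A}_j$ being the same on every tensor factor --- which is exactly what breaks once $\L^{(j)}$ involves independent particle losses. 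So neither your state-based route nor the paper's characteristic-function route extends in any obvious way to the robust case, which is why it remains a conjecture.
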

We prove a weaker result that, together with Theorem \ref{thm:mqb1}, supports Conjecture \ref{conj}:
\begin{theorem}
    The Dicke subspace $\mathfrak{D}_N$ (with dimension $d \ll N$) and the set $\mathcal{ND}$ possess MQB$_n$ for all $n \in \mathbb{N}$.
\label{thm:mqbn}
\end{theorem}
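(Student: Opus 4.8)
The plan is to reduce the $n$ consecutive coarse-grained measurements on the Dicke subspace to $n$ consecutive Gaussian-smeared quadrature measurements on a harmonic oscillator via a Holstein--Primakoff contraction, and to upgrade the single-measurement result of Theorem~\ref{thm:mqb1} from convergence of the induced distribution to convergence of the rescaled Kraus operators themselves, so that the $n$-fold composition can be controlled. First, by functional calculus $K_N(x|A)=\Phi_N\big(x-\Sigma_N(A)\big)$, where $\Sigma_N(A)=\sum_i A_i$ is permutation invariant; hence every $K_N(x|A)$ leaves $\mathrm{Sym}^N(\mathbb{C}^2)$ invariant, and since $\mathfrak{D}_N\subseteq\mathrm{Sym}^N(\mathbb{C}^2)$ and the non-robust scenario has no loss or decoherence channels, the whole sequentially measured state stays inside $\mathrm{Sym}^N(\mathbb{C}^2)$, i.e. the spin-$N/2$ representation. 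Writing $A=a_0\mathbb{I}+\vec{a}\cdot\vec{\sigma}$ one has $\Sigma_N(A)=Na_0+2\,\vec{a}\cdot\vec{J}$, and on the finite-excitation states built on $\ket{0}^{\otimes N}$ the Holstein--Primakoff expansion ($J_z=-N/2+a^\dagger a$, $J_+\simeq\sqrt{N}\,a^\dagger$, $J_-\simeq\sqrt{N}\,a$, with $\ket{N,k}\leftrightarrow\ket{k}$) gives $\Sigma_N(A)=N\langle 0|A|0\rangle+\sqrt{N}\big(\langle 0|A|1\rangle\,a^\dagger+\langle 0|A|1\rangle^{*}\,a\big)+O(1)$; the $\sqrt{N}$ term is proportional to the quadrature $X_\j$ with $\j=\arg\langle 0|A|1\rangle$ (which requires $A\in\mathcal{ND}$), and the $O(1)$ remainder is exactly what produces the extra variance in $\b^2$ of Eq.~\eqref{eq:beta} with $p=1$, $\G=\mathrm{Id}$.

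Next I would fix the affine rescalings $\lambda_N^{(j)},\mu_N^{(j)}$ (centering by $N\langle 0|A_j|0\rangle$ and scaling by $\Theta(1/\sqrt{N})$) as in the proof of Theorem~\ref{thm:mqb1}, and write $\tilde{K}_N(x|A)$ for the rescaled Kraus operator. The key intermediate claim is that $\tilde{K}_N(x|A)\to K(x|A)$ \emph{strongly} on the dense domain of finite-excitation vectors, with operator norm bounded uniformly in $N$ — the uniform bound being immediate since $\|\tilde{K}_N(x|A)\|\le(\lambda_N)^{-1/2}\|\Phi_N\|_\infty=O(1)$. This I would obtain from three ingredients: (a) convergence of all matrix elements $\langle N,l|\tilde{K}_N(x|A)|N,k\rangle\to\langle l|K(x|A)|k\rangle$, which is essentially the Gaussian-integral asymptotics already carried out for $n=1$ in the proof of Theorem~\ref{thm:mqb1}; (b) convergence of the norms $\|\tilde{K}_N(x|A)\ket{N,k}\|^2\to\|K(x|A)\ket{k}\|^2$, again contained in the $n=1$ analysis; and (c) the elementary fact that entrywise convergence of $\ell^2$ vectors together with convergence of their norms forces norm convergence — which, with the uniform operator bound, is equivalent to strong operator convergence.

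Since uniform boundedness and strong convergence are preserved under composition, for each fixed $(x_1,\dots,x_n)$ and each $k$ one then gets $\tilde{K}_N^{(n)}(x_n)\cdots\tilde{K}_N^{(1)}(x_1)\ket{N,k}\to K^{(n)}(x_n)\cdots K^{(1)}(x_1)\ket{k}$ in norm; by the linearity property of the macroscopic quantum representation ($\ket{N,k}\mapsto\ket{k}$ extended linearly) this passes to every $\r_N\in\mathcal{D}(\mathfrak{D}_N)$ — a fixed convex combination of pure states supported on the first $d$ Dicke states, mapped to $\r\in\mathcal{D}(L^2(\mathbb{R}))$ — and hence the rescaled joint densities converge pointwise,
\begin{align*}
\tr\!\big[\tilde{K}_N^{(n)}\cdots\tilde{K}_N^{(1)}\,\r_N\,\tilde{K}_N^{(1)\dagger}\cdots\tilde{K}_N^{(n)\dagger}\big]\ \longrightarrow\ \tr\!\big[K^{(n)}\cdots K^{(1)}\,\r\,K^{(1)\dagger}\cdots K^{(n)\dagger}\big].
\end{align*}
Both sides are probability densities in $(x_1,\dots,x_n)$ — the finite-$N$ one because $\int dx\,E_N(x|A)=\mathbb{I}$, the limiting one because $\int dx\,K(x|A)^\dagger K(x|A)=\mathbb{I}$ — so Scheffé's lemma upgrades pointwise convergence of densities to convergence in distribution, which is precisely MQB$_n$; the incompatibility of the limiting set $\{K(x|A):A\in\mathcal{ND}\}$ required in a macroscopic quantum representation (the quadratures $X_\j$ for different $\j=\arg\langle 0|A|1\rangle$ do not commute) is inherited from Theorem~\ref{thm:mqb1}, as are the limit Hilbert space $L^2(\mathbb{R})$ and the identification with number states.

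The main obstacle is item (c) together with the propagation through the composition: one must show that the intermediate post-measurement states — which, being images of $\Phi_N(x-\Sigma_N(A))$ applied to low-excitation states, spread over $\Theta(\sqrt{N})$ Dicke levels at each step — retain asymptotically negligible weight in the high-excitation sectors where the Holstein--Primakoff contraction breaks down, and that this control is uniform in $N$ and survives all $n$ compositions. This should follow from the explicit Gaussian form of $K_N(x|A)$ and $N$-uniform bounds on the rescaled moments of the transverse collective operator $(\Sigma_N(A)-N\langle 0|A|0\rangle)/\sqrt{N}$ restricted to finite-excitation states, but compounding such tail estimates $n$ times is the delicate step, and is presumably the reason only MQB$_n$, rather than robust MQB$_n$ (Conjecture~\ref{conj}), is established here.
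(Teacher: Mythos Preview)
Your approach is sound and genuinely different from the paper's. The paper proceeds by writing each rescaled Kraus operator as a Fourier integral $\tilde K_N^{(j)}\propto\int dp_j\,e^{ip_jx_j-\cdots}\,\mathcal A_j^{\otimes N}$, so that matrix elements of products $\tilde K_N^{(1)}\cdots\tilde K_N^{(n+1)}$ between Dicke states reduce to $\langle N,k|(\mathcal A_1\cdots\mathcal A_{n+1})^{\otimes N}|N,l\rangle$, computed explicitly via the same Dicke asymptotics used for Theorem~\ref{thm:mqb1}. The induction step---limit of the product equals the (summed) product of limits---is then reduced to a purely combinatorial identity between two families $f_{kl}(a,b,a',b')$ and $g_{kl}(a,b,a',b')$ expressed through generalized Laguerre polynomials, which the paper proves by a lengthy double induction using the recurrences $L_n^m=L_n^{m+1}-L_{n-1}^{m+1}$ and $nL_n^m=(n+m)L_{n-1}^m-xL_{n-1}^{m+1}$.

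Your route trades this combinatorics for an operator-theoretic argument: upgrade the $n=1$ analysis from convergence of the POVM densities to \emph{strong} convergence of the rescaled Kraus operators on number states (your items (a)--(c)), then use the elementary fact that strong convergence together with a uniform operator-norm bound propagates through finite compositions, and finish with Scheff\'e. This completely bypasses the Laguerre identity: the ``limit of products equals product of limits'' statement, which the paper establishes by brute-force recurrences, falls out of strong convergence for free. The price is that (a) is not literally contained in the proof of Theorem~\ref{thm:mqb1} (which only treats $K_N^\dagger K_N$), so one must rerun the Dicke matrix-element asymptotics for $K_N$ itself---straightforward with the same machinery---and one must set up the identification $|N,k\rangle\leftrightarrow|k\rangle$ carefully as an isometric embedding of varying Hilbert spaces. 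Your closing worry about tail control through $n$ compositions is actually already handled by the uniform bound plus density: once $\hat K_N^{(j)}\to K^{(j)}$ strongly on each $|k\rangle$ with $\sup_N\|\hat K_N^{(j)}\|<\infty$, the composition converges on any vector, with no separate tail estimate needed. The genuine reason the robust case is harder is rather that the intervening channels $\Lambda^{(j)}$ take the state out of a fixed symmetric sector, so the Holstein--Primakoff identification and the uniform-bound argument no longer apply cleanly.
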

\begin{proof}
    See Appendix C in the supplementary material.
\end{proof}
Moreover, as shown in Appendix C of the supplementary material, all MQB$_n$ are given by the same macroscopic quantum representation as before
\begin{align*}
    \Big( \mathfrak{D}_N, \ket{N,k}, K_N(x|A) \Big) \to  \Big( L^2(\mathbb{R}), \ket{k}, K(x|A) \Big)
\end{align*}
with $\b = \s$ (i.e., the value of $\b$ is given by Eq. \eqref{eq:beta} with $p=1$ and $\G = \text{Id}).$

\section{Tests of macroscopic non-classical behaviours}
\label{sec:deviceindependent}
We have shown that the above system, consisting of Dicke states $\mathfrak{D}_N$ and collective measurements (as long as they are not represented by diagonal hermitian operators), possesses  MQB$_n$ for all $n \in \mathbb{N}$ as well as \emph{robust} MQB$_1$, which lead us to conjecture that it does indeed possess robust MQB$_n$ for all $n \in \mathbb{N}$. These results strongly hint that the quantum nature of the system can be observed at the macroscopic scale. In order to make this statement concrete, we use tests that witnesses the non-classical nature of our system at the macroscopic scale. In particular, we will see that our system possesses nonlocal correlations (in the sense of Bell \cite{bell}), ruling out a local realistic description of the outcome statistics in the macroscopic limit. We also show a violation of a Leggett-Garg inequality \cite{leggettgarg}, ruling out a macroscopic realistic description of the statistics.

\begin{figure}[t]
        \centering 
        \includegraphics[width=0.45\textwidth]{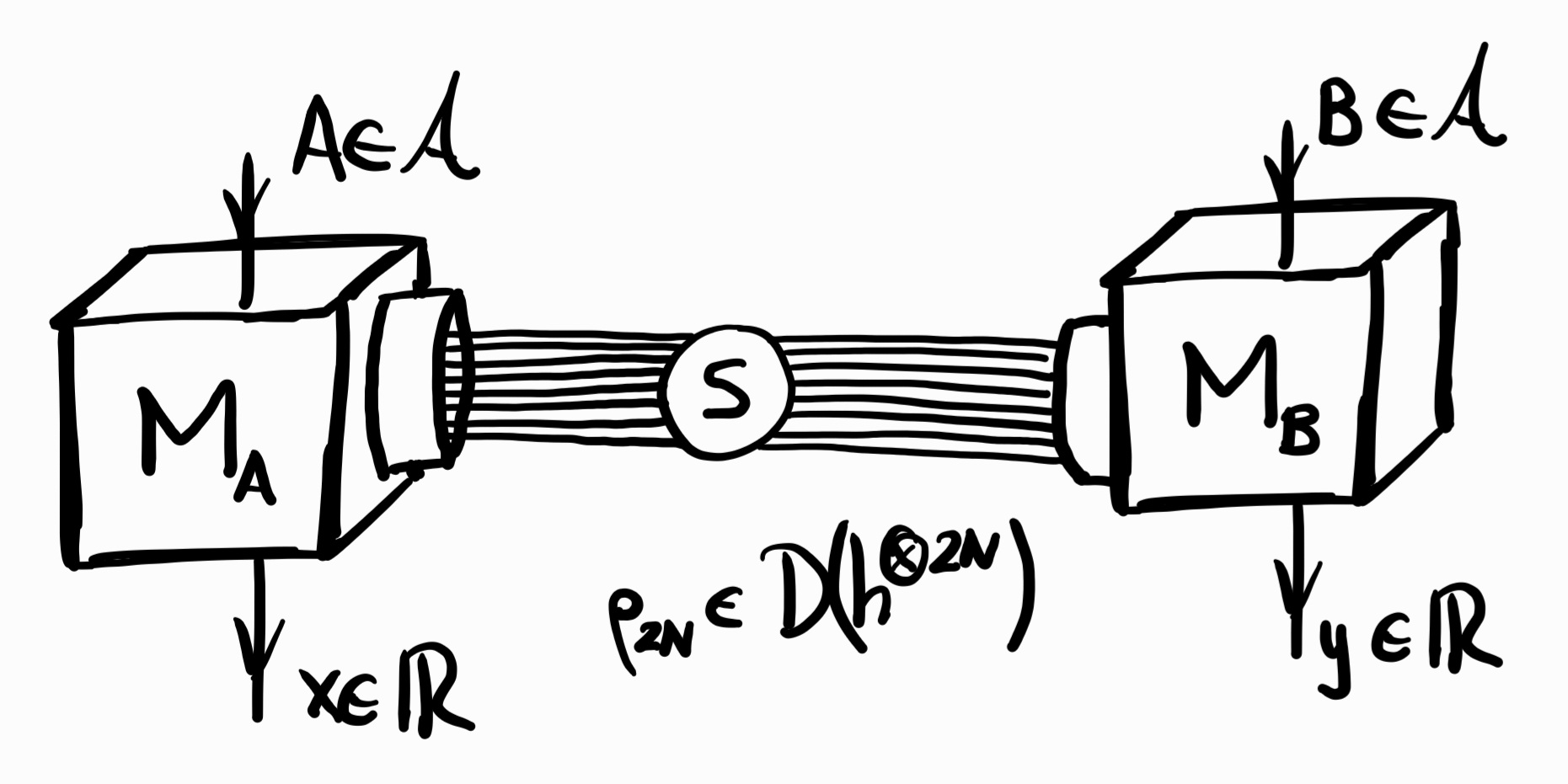} 
        \caption{\textbf{Macroscopic bipartite Bell experiment.} A macroscopic quantum system $\mathsf{S}$ is divided in two parts which are sent to measurement apparatuses $\mathsf{M}_A$ and $\mathsf{M}_B$ respectively, which implement coarse-grained measurements.}
        \label{fig:bell}
\end{figure}

\subsubsection*{Violation of a Bell inequality}
Consider the bipartite macroscopic measurement scenario depicted in Figure \ref{fig:bell}: a system $\mathsf{S}$, subject to assumptions \ref{ass:largen} - \ref{ass:loss}, is divided in two parts which are sent to measurement apparatuses $\mathsf{M}_A$ and $\mathsf{M}_B$, each of which satisfies assumptions \ref{ass:coll} - \ref{ass:cg}. Suppose the system is in a state of the form $\sum_{k =0}^{d-1} c_k \ket{N,k}_A \ox \ket{N,k}_B \in \mathfrak{D}_N \ox \mathfrak{D}_N$, and suppose that Alice selects a single-particle observable $A \in \mathcal{ND}$ obtaining an outcome $x \in \mathbb{R}$. Likewise, Bob selects $B \in \mathcal{ND}$ obtaining $y \in \mathbb{R}$. Then, by applying Theorem \ref{thm:mqb1} to each party locally, the limit bipartite distribution is given by
\begin{align*}
    P(x,y|A,B) = \bra{\psi}  E(x|A) \ox E(y|B) \ket{\psi} \, ,
\end{align*}
where $\ket{\psi} = \sum_{k=0}^{d-1} c_k \ket{k}_A \ox \ket{k}_B \in L^2(\mathbb{R}) \ox L^2(\mathbb{R})$ and $E(x|A) = K(x|A)^\dagger K(x|A)$ with $K(x|A)$ given by \eqref{eq:limitkraus}. This distribution does not admit a local hidden variable model, as we showed by explicit violation of a Bell-CHSH inequality in \cite{gd}.

\begin{figure}[t]
        \centering 
        \includegraphics[width=0.45\textwidth]{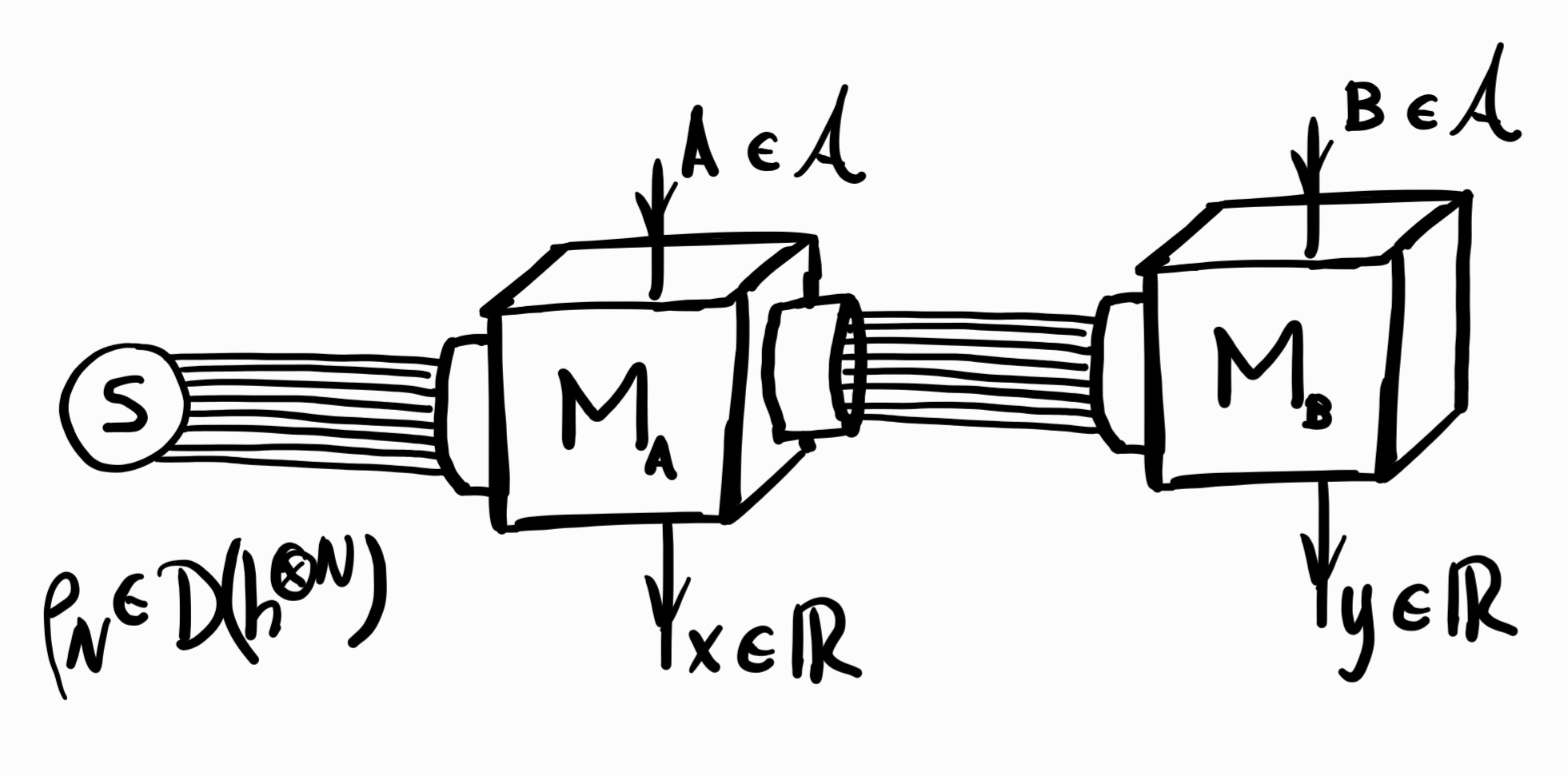} 
        \caption{\textbf{Macroscopic two-time measurement Leggett-Garg experiment.} A macroscopic quantum system $\mathsf{S}$ is consecutively sent to measurement apparatuses $\mathsf{M}_A$ and $\mathsf{M}_B$, which implement coarse-grained measurements.}
        \label{fig:lg}
\end{figure}

\subsubsection*{Violation of a Leggett-Garg inequality}
We now consider a macroscopic Leggett-Garg experiment as depicted in Figure \ref{fig:lg}: a system $\mathsf{S}$, subject to assumptions $\ref{ass:largen}$ - $\ref{ass:loss}$, is consecutively sent into two measurements apparatuses $\mathsf{M}_A$ and $\mathsf{M}_B$, which satisfy conditions \ref{ass:coll} - \ref{ass:cg}. Suppose the system is initially in the state $\sum_{k=0}^{d-1} c_k \ket{N,k} \in \mathfrak{D}_N$. Next, suppose that during the first measurement, defined by a single-particle observable $A \in \mathcal{ND}$, an outcome $x \in \mathbb{R}$ is obtained. Similarly, for the second measurement, we have the associated observable $B \in \mathcal{ND}$ resulting in an outcome $y \in \mathbb{R}$. Then, applying Theorem \ref{thm:mqbn} (namely the MQB$_2$ of the system), the limit bipartite distribution is given by
\begin{align*}
    P(x,y|A,B) = \bra{\psi} K(x|A)^\dagger K(y|B)^\dagger K(y|B) K(x|A) \ket{\psi} \, ,
\end{align*}
where $\ket{\psi} = \sum_{k=0}^{d-1} c_k \ket{k} \in L^2(\mathbb{R})$ and $K(x|A)$ are given by \eqref{eq:limitkraus}. Now consider the following Leggett-Garg CHSH inequality \cite{lgchsh}
\begin{align}
    C = \langle a_1 b_1 \rangle + \langle a_1 b_2 \rangle + \langle a_2 b_1 \rangle - \langle a_2 b_2 \rangle \leq 2  \, ,
\label{eq:lgchsh}
\end{align}
where $a_i = \sgn(x|A_i)$ and  $b_i = \sgn(y|B_i)$. Then, as we show in the Appendix D in the supplementary material, the state 
\begin{align*}
    \ket{\psi} = \sqrt{\frac{1}{2} - \frac{577}{2 \sqrt{1244179}}} \, \ket{0} + \sqrt{\frac{1}{2} + \frac{577}{2 \sqrt{1244179}}} \,  \ket{2} \, .
\end{align*}
gives
\begin{align*}
    C = \frac{2}{675 \pi} \left( 577 + \sqrt{1244179} + 2700 \arctan \frac{1}{3} \right) \simeq 2.42
\end{align*}
for $A_i = \s_x \cos \j_{i} + \s_y \sin \j_{i}$ and $B_i = \s_x \cos \q_{i} + \s_y \sin \q_{i}$ with the following set of angles
\begin{align*}
    \j_{1} = \frac{\pi}{4} \, , \quad \j_{2} = \frac{3 \pi}{4} \, , \quad \q_{1} = \frac{\pi}{2} \, , \quad \q_{2} = 0 \, . 
\end{align*}
This violation of a Leggett-Garg inequality rules out a macroscopic realistic description of the correlations obtained.

\section{Outlook and open questions}
\label{sec:outlook}
Our results shed new light on the question of the quantum-to-classical transition, suggesting that genuine quantum phenomena might be more robust than previously thought. There are several interesting questions to be addressed in the future:
\begin{itemize}
    \item {\bf Question of scale}. Our results show that genuine quantum behaviour can be visible through measurements with a precision of the order of $\sqrt{N}$, even in the presence of (single-particle) decoherence and losses. The relevant parameter that defines the scale of these quantum effects is, therefore, the resolution of the measurements as a function of the system's size $N$. An open question is whether genuine quantum effects exist at a scale larger than $\sqrt{N}$, thus surviving even more coarse-graining than the system we consider. The results of Kofler and Brukner \cite{kb, kb2010} indicate that this is not possible, showing classicality for a resolution much larger than $\sqrt{N}$. But their result only applies to finite-dimensional systems, and the case of infinite dimensional systems (in our language above, the case where the single-particle Hilbert space $\mathfrak{h}$ is infinite-dimensional) is still to be investigated. 

    \item {\bf Macroscopicity measures}. The question of macroscopic quantum states dates back to Schrödinger \cite{schrodinger}, and still today work is done to characterize the ``macroscopicity'' of quantum states (see \cite{frowismacroquantum} and references therein).  An example of a quantum state that is typically thought to be macroscopically quantum is the Greenberger-Horne-Zeilinger (GHZ) \cite{ghz} state $(\ket{0}^{\ox N} + \ket{1}^{\ox N})/2$ (also called cat-like state). However, such a state is extremely fragile, since the loss of coherence of a single particle destroys the coherence of the global state, collapsing it into a classical mixture. For this reason, the GHZ state is not useful for our purposes. Another state typically associated with macroscopic quantumness is the $W$ state \cite{w}, which in our bipartite setting corresponds to the Dicke state $\ket{2N,1} = \frac{1}{\sqrt{2}} \big( \ket{N,1} \ox \ket{N,0} + \ket{N,0} \ox \ket{N,1} \big)$. We have however not found any Bell violation for such a state. These two examples show that quantum macroscopicity does not necessarily result in non-classicality in the macroscopic limit as defined in our sense. More precise relations are to be left for future considerations.
    
    \item {\bf Classical limit and infinite tensor products}. There is a recent proposal that suggests that focusing on type I operator algebras might be the source of the problem and considering instead quantum mechanics on type II operator algebras might provide a framework that encompasses the classical macroscopic limit in a natural way (see \cite{vandenbossche}). This seems to provide another formalism to arrive at the macroscopic limit, and an interesting question to be investigated is how these findings relate to our result.
    
    \item {\bf Experimental implementations}. Of particular interest are experimental considerations to test our findings. Some potential experimental settings that seem to provide the appropriate characteristics are atomic memories and Bose-Einstein condensates (see e.g. \cite{greve, cassens}). While our results are derived in the explicit limit $N\rightarrow\infty$, one has to derive concrete bounds for finite $N$ or at least provide numerical simulations.
\end{itemize}

\begin{acknowledgments}
\emph{Acknowledgments.}--- We would like to thank Joshua Morris for helpful comments. This research was funded in whole, or in part, by the Austrian Science Fund (FWF) [10.55776/F71] and [10.55776/P36994]. M.G. also acknowledges support from the ESQ Discovery programme (Erwin Schrödinger Center for Quantum Science \& Technology), hosted by the Austrian Academy of Sciences (ÖAW). For open access purposes, the author(s) has applied a CC BY public copyright license to any author accepted manuscript version arising from this submission. 
\end{acknowledgments}

\vskip2pc

\bibliographystyle{apsrev4-1}
\bibliography{references}

\onecolumngrid
\appendix

\section{Kraus operators}
\label{app:weakmeasurement}
Let the initial joint state of the system and pointer be described by the density matrix $\r_{\mathsf{S}} \ox \r_{\mathsf{P}}$ where $\r_{\mathsf{S}}$ is the initial state of the system in Fock-like space $\oplus_{M=0}^N \mathfrak{h}^{\ox M}$ and $\r_{\mathsf{P}} = \ket{\Phi}\bra{\Phi}$ is the initial state of the pointer. It is convenient to write $\r_{\mathsf{S}}$ in the eigenbasis of the $A_i$'s, i.e.
\begin{align}
    \r_\mathsf{S} = \bigoplus_{M=0}^N  \sum_{\begin{array}{c} 
    \scriptstyle a_1 \dots a_M \\[-4pt]
    \scriptstyle b_1 \dots b_M
    \end{array}} C_{\begin{array}{c}
    \scriptstyle a_1 \dots a_M \\[-5pt]
    \scriptstyle b_1 \dots b_M
    \end{array}} \, \ket{a_1\dots a_M}_s \, _s\bra{b_1 \dots b_M} \, .
\end{align}
Then, after unitary interaction $U = \exp \left\{ - i \oplus_{M=1}^N \sum_{i=1}^M A_i(s) \ox P \right\}$, the joint state of system and pointer is
\begin{align}
    U  \r_{\mathsf{S}} \ox \r_{\mathsf{P}}   U^\dagger & = \bigoplus_{M=0}^N \sum_{\begin{array}{c} 
    \scriptstyle a_1 \dots a_M \\[-4pt]
    \scriptstyle b_1 \dots b_M
    \end{array}} C_{\begin{array}{c} 
    \scriptstyle a_1 \dots a_M \\[-4pt]
    \scriptstyle b_1 \dots b_M
    \end{array}}   e^{-i \sum_{i=1}^M A_i(s) \ox P} \, \ket{a_1\dots a_M}_s \, _s\bra{b_1 \dots b_M} \ox \ket{\Phi} \bra{\Phi}  \, e^{i \sum_{i=1}^M A_i(s) \ox P} \nonumber \\
    & = \bigoplus_{M=0}^N \sum_{\begin{array}{c} 
    \scriptstyle a_1 \dots a_M \\[-4pt]
    \scriptstyle b_1 \dots b_M
    \end{array}} C_{\begin{array}{c} 
    \scriptstyle a_1 \dots a_M \\[-4pt]
    \scriptstyle b_1 \dots b_M
    \end{array}}   e^{-i \left( \sum_{i=1}^M a_i \right)  P} \, \ket{a_1\dots a_M}_s \, _s\bra{b_1 \dots b_M} \ox \ket{\Phi} \bra{\Phi}  \, e^{i \left( \sum_{i=1}^M b_i \right) P} \nonumber \\
    & = \bigoplus_{M=0}^N \sum_{\begin{array}{c} 
    \scriptstyle a_1 \dots a_M \\[-4pt]
    \scriptstyle b_1 \dots b_M
    \end{array}} C_{\begin{array}{c} 
    \scriptstyle a_1 \dots a_M \\[-4pt]
    \scriptstyle b_1 \dots b_M
    \end{array}}    \ket{a_1\dots a_M}_s \, _s\bra{b_1 \dots b_M} \ox e^{-i \left( \sum_{i=1}^M a_i \right)  P} \, \ket{\Phi} \bra{\Phi}  \, e^{i \left( \sum_{i=1}^M b_i \right) P} \, .
\end{align}
If we measure the position of the pointer obtaining the value $x$, then the state of the system is projected to
\begin{align}
    _\mathsf{P}\mel{x}{ U  \r_{\mathsf{S}}  \ox \r_{\mathsf{P}}    U^\dagger}{x}_\mathsf{P} & = \bigoplus_{M=0}^N \sum_{\begin{array}{c} 
    \scriptstyle a_1 \dots a_M \\[-4pt]
    \scriptstyle b_1 \dots b_M
    \end{array}} C_{\begin{array}{c} 
    \scriptstyle a_1 \dots a_M \\[-4pt]
    \scriptstyle b_1 \dots b_M
    \end{array}}    \ket{a_1\dots a_M}_s \, _s\bra{b_1 \dots b_M} \,  \mel{x}{e^{-i \left( \sum_{i=1}^M a_i \right)  P} \, \ket{\Phi} \bra{\Phi}  \, e^{i \left( \sum_{i=1}^M b_i \right) P} }{x} \nonumber \\
    & = \bigoplus_{M=0}^N \sum_{\begin{array}{c} 
    \scriptstyle a_1 \dots a_M \\[-4pt]
    \scriptstyle b_1 \dots b_M
    \end{array}} C_{\begin{array}{c} 
    \scriptstyle a_1 \dots a_M \\[-4pt]
    \scriptstyle b_1 \dots b_M
    \end{array}}    \ket{a_1\dots a_M}_s \, _s\bra{b_1 \dots b_M} \, \Phi \Big( x - \sum_{i=1}^M a_i \Big) \,  \Phi^* \Big( x - \sum_{i=1}^M b_i \Big) \nonumber \\
    & = K_N(x|s) \,  \r_\mathsf{S} \,K_N(x|s)^\dagger \, ,
\end{align}
where 
\begin{align}
    K_N(x|s) = \bigoplus_{M=0}^N \sum_{a_1 \dots a_M} \Phi \Big( x - \sum_{i=1}^M a_i \Big) \,  \ket{a_1\dots a_M}_s \, _s\bra{b_1 \dots b_M} \, ,
\end{align}
and the probability of obatining such outcome $x$ is $\tr \big[ K_N(x|s) \r_\mathsf{S} K_N^\dagger(x|s) \big]$, proving Equations \eqref{eq:distributionx}, \eqref{eq:kraus} and \eqref{eq:rhon(x,s)} in the main text.

\section{Proof of Theorem \ref{thm:mqb1}.}
\label{app:mqb1}
\begin{proof}
Consider the random variable $X_N$ with distribution
\begin{align}
    P_N(x) = \tr \Big[K_N(x|A)  \L(\r_N) K_N^\dagger (x|A) \Big] \, ,
\end{align}
where $\r_N = \sum_{k,l=0}^{d_N-1} c_{kl} \ket{N,l} \bra{N,k}$, $d_N$ satisfies $\lim_{N \to \infty} d_N/N=0$ and
\begin{align}
    K_N(x|A) = \bigoplus_{M=0}^N \sum_{a_1 \dots a_M}  \Phi \Big( x - \sum_{i=1}^M a_i \Big)  \, \Pi_{a_1 | A}  \otimes \dots \otimes \Pi_{a_M | A}\, .
\end{align}
By Lévy's continuity theorem, in order to show that $X_N$ converges in distribution it is sufficient to show that its characteristic function $\chi_N(t)$
converges pointwise to some function $\chi(t)$ continuous at $t=0$. We have
\begin{align}
    \chi_N(t) & =  \int_{-\infty}^{+\infty} dx \,  e^{itx} \,P_N(x) \nonumber \\
    & =  \int_{-\infty}^{+\infty} dx \,  e^{itx} \, \tr \Big[K(x|A) \L (\r_N ) \, K_N^\dagger (x|A) \Big] \nonumber \\
    & = \tr \left[ \L (\r_N)  \int_{-\infty}^{+\infty} dx \, e^{itx}  K_N^\dagger (x|A) K_N(x|A) \right] \nonumber \\
    & = \tr \left[ \L(\r_N) \bigoplus_{M=0}^N \sum_{a_1 \dots a_M}  \int_{-\infty}^{+\infty} dx \, e^{itx} \Phi^2 \Big( x - \sum_{i=1}^M a_i \Big)  \, \Pi_{a_1 | A}  \otimes \dots \otimes \Pi_{a_M | A} \right] \, .
\end{align}
Using that 
\begin{align}
    \int_{-\infty}^{+\infty} dx \,  e^{itx} \, \Phi^2 \Big( x - \sum_{i=1}^M a_i \Big) & =  \int_{-\infty}^{+\infty} dx \,  e^{itx} \, \frac{e^{-(x- \sum_{i=1}^M a_i)^2/(2 N \s^2)}}{\sqrt{2 \pi N \s^2}} \nonumber \\
    & = e^{i t \sum_{i=1}^M a_i - N \s^2 t^2 /2} \, ,
\end{align}
we have
\begin{align}
    \chi_N(t) & =  e^{- N \s^2 t^2 /2} \, \tr \left[ \L(\r_N) \bigoplus_{M=0}^N \sum_{a_1 \dots a_M}  e^{i t \sum_{i=1}^M a_i}      \, \Pi_{a_1 | A}  \otimes \dots \otimes \Pi_{a_M | A} \right] \nonumber \\
    & = e^{- N \s^2 t^2 /2} \, \tr \left[ \L(\r_N) \, \bigoplus_{M=0}^N \left( \sum_{a_1} e^{i t a_1} \Pi_{a_1|A} \right) \ox \dots \ox \left( \sum_{a_M} e^{i t a_M} \Pi_{a_M|A} \right) \right] \nonumber \\
    & = e^{- N \s^2 t^2 /2} \, \tr \left[ \L(\r_N) \, \bigoplus_{M=0}^N \big( e^{i t A} \big)^{\ox M} \right] \nonumber \\
    & =   e^{- N \s^2 t^2 /2} \,  \tr \left[ \bigoplus_{M=0}^N   {N \choose M} p^{M} (1-p)^{N-M} \, \mathcal{S} \left\{ \tr_{N-M} \Big[ \G^{\otimes N}  (\r_N) \,  \Big] \right\} \big( e^{itA} \big)^{\ox M} \right]  \nonumber \\
    & = e^{- N \s^2 t^2 /2} \, \sum_{M=0}^N {N \choose M} p^{M} (1-p)^{N-M} \tr_M \bigg[\tr_{N-M} \Big[ \G^{\otimes N} (\r_N) \Big] \big( e^{itA} \big)^{\ox M}  \bigg] \nonumber \\
    & = e^{- N \s^2 t^2 /2} \, \sum_{M=0}^N {N \choose M} p^{M} (1-p)^{N-M} \tr_N \left[ \G^{\otimes N} (\r_N) \, \big( e^{itA} \big)^{\ox M} \ox \id^{\ox (N-M)} \right] \nonumber \\
    & =e^{- N \s^2 t^2 /2} \, \tr \left[ \G^{\otimes N} (\r_N) \,  \sum_{M=0}^N {N \choose M} p^{M} (1-p)^{N-M} \, \big( e^{itA} \big)^{\ox M} \ox \id^{\ox (N-M)}  \right] \nonumber \\
    & = e^{- N \s^2 t^2 /2} \, \tr \left[ \G^{\otimes N} (\r_N) \, \big( 1-p + p \, e^{itA} \big)^{\ox N} \right] \nonumber \\
    & = e^{- N \s^2 t^2 /2} \, \tr \left[  \r_N  \,  \G^\dagger \big( 1-p + p \, e^{itA} \big)^{\ox N} \right] \, .
\end{align}
We now perform an affine transformation $X_N \mapsto \l_N X_N + \m_N$ of the random variable with $\l_N = 1/\sqrt{2 N p^2 |G_{01}|^2}$ and $\m_N = -  G_{00} \sqrt{N} / \sqrt{2 |G_{01}|^2 }$, where $G_{ij} := \mel{i}{\Gamma^\dagger(A)}{j}$. The characteristic function of the new random variable is
\begin{align}
    \chi_N(t) & = e^{- i t  G_{00} \sqrt{N}/\sqrt{2 |G_{01}|^2} - \s^2  t^2 / (4 p^2 |G_{01}|^2)} \,  \tr \left[  \r_N  \,  \Gamma^\dagger \Big(1-p+p \, e^{i t A /\sqrt{2 N p^2 |G_{01}|^2} } \Big)^{\ox N} \right] \, .
\end{align}
Defining $\mathcal{G} = \Gamma^\dagger \Big(1-p+p \, e^{i t A /\sqrt{2 N p^2 |G_{01}|^2}} \Big)$, the matrix element $\mel{N,k}{\mathcal{G}^{\otimes N}}{N,l}$ in the case $k \geq l$ is:
\begin{align}
    \bra{N, k}  \, \mathcal{G}^{\ox N} \,  \ket{N, l} & = \frac{1}{\sqrt{ \binom{N}{k} \binom{N}{l}}} \Big( \bbra{\underbrace{1\dots1}_{k} 0\dots0} + \text{perm.} \Big) \mathcal{G}^{\ox N} \Big( \bket{\underbrace{1\dots1}_{l} 0\dots0} +  \text{perm.} \Big) \nonumber \\
    & = \sqrt{\frac{\binom{N}{k}}{ \binom{N}{l}}}  \, \bbra{\underbrace{1\dots1}_{k} 0\dots0}  \, \mathcal{G}^{\ox N} \Big( \bket{\underbrace{1\dots1}_{l} 0\dots0} +  \text{perm.} \Big) \nonumber \\
    & =  \sqrt{\frac{\binom{N}{k}}{ \binom{N}{l}}}  \, \sum_{m=0}^l  \, \binom{k}{m} \binom{N-k}{l-m}  \, \bbra{\underbrace{1\dots1}_{k} 0\dots0}  \, \mathcal{G}^{\ox N} \bket{\underbrace{\underbrace{1\dots1}_{m}0\dots0}_{k} \underbrace{\underbrace{1..1}_{l-m} 0\dots0}_{N-k}}  \nonumber \\
    & = \sqrt{\frac{\binom{N}{k}}{ \binom{N}{l}}}  \,  \sum_{m=0}^l \,  \binom{k}{m} \binom{N-k}{l-m}  \,  \mel{1}{ \mathcal{G} }{1}^{m} \, \mel{1}{\mathcal{G} }{0}^{k-m} \, \mel{0}{ \mathcal{G} }{1}^{l-m} \,  \mel{0}{\mathcal{G} }{0}^{N-k-l+m} \nonumber \\
    & =\sum_{m=0}^l  \frac{\sqrt{k! \, l! \, N^{k+l-2m} \, } }{ m! (k-m)! (l-m)!}\,  \big[ 1 + O(1/N) \big] \,  \mel{1}{ \mathcal{G}}{1}^{m} \, \mel{1}{ \mathcal{G} }{0}^{k-m}  \, \mel{0}{ \mathcal{G} }{1}^{l-m} \,  \mel{0}{\mathcal{G} }{0}^{N-k-l+m} \, .
\label{eq:dickematrixelements}
\end{align}
In the second equality we have used permutational invariance; in the third equality we have gathered all the terms that contribute equally, multiplied by their combinatorial multiplicity; in the last line we have used Stirling's formula. Expanding $\mathcal{G}$ as
\begin{align}
    \mathcal{G} & = \G^\dagger \Big( 1-p + p \, e^{i t A /\sqrt{2 N p^2 |G_{01}|^2}} \Big) \nonumber \\
    & = \G^\dagger \Big( 1 +  p \frac{i t A}{\sqrt{2 N p^2 |G_{01}|^2}} -p  \frac{ t^2 A^2}{4 N p^2 |G_{01}|^2} + O(N^{-3/2}) \Big) \nonumber \\
    & = 1 +  \frac{i t}{\sqrt{2 N |G_{01}|^2}} \G^\dagger (A)-  \frac{ t^2}{4 N p|G_{01}|^2} \G^\dagger(A^2) + O(N^{-3/2})  \, ,
\end{align}
we can see that $\mel{1}{\mathcal{G}}{1}^m = 1 + O(1/\sqrt{N})$ and $\mel{0}{\mathcal{G}}{0}^m = 1 + O(1/\sqrt{N})$, while $\mel{1}{\mathcal{G}}{0}^{k-m} = \Big( \frac{it }{\sqrt{2 N |G_{01}|^2}} \, G_{10} \Big)^{k-m} \big[ 1 + O(1/\sqrt{N}) \big]$ and $\mel{0}{\mathcal{G}}{1}^{l-m} = \Big(  \frac{it}{\sqrt{2 N |G_{01}|^2}} \, G_{01}  \Big)^{l-m} \big[ 1 + O(1/\sqrt{N}) \big]$. These first order contributions of the off-diagonal matrix elements cancel the overall factor of $\sqrt{N^{k+l-2m}}$, while higher order corrections are suppressed. On the other hand,
\begin{align}
    \mel{0}{\mathcal{A}}{0}^{N-k-l} & = \exp \Big\{ (N-k-l) \log \mel{0}{\mathcal{G}}{0} \Big\} \nonumber \\
    & = \exp \bigg\{ (N-k-l) \log \bigg[ 1 + \frac{it }{\sqrt{2N |G_{01}|^2} } \mel{0}{\G^\dagger(A)}{0} - \frac{t^2}{4 N p |G_{01}|^2} \mel{0}{\G^\dagger(A^2)}{0} + O(N^{-3/2}) \bigg] \bigg\} \nonumber \\
    & = \exp \bigg\{ (N-k-l) \bigg[ \frac{i t G_{00}}{\sqrt{2 N |G_{01}|^2} } - \frac{t^2 G^{(2)}_{00}}{4 N p |G_{01}|^2} + \frac{ t^2 G_{00}^2 }{4  N |G_{01}|^2} + O(N^{-3/2}) \bigg] \bigg\} \nonumber \\
    & = \exp \bigg\{ \frac{i t G_{00} \sqrt{N}}{\sqrt{2 |G_{01}|^2}}  - \frac{t^2}{4 p |G_{01}|^2}  \Big( G_{00}^{(2)} - p G_{00}^2   \Big) +O(1/\sqrt{N}) \bigg\} \, .
\end{align}
Then, defining $G = \G^\dagger(A)$ and $G^{(2)} = \G^\dagger(A^2)$, the matrix element reads
\begin{align}
\label{eq:nkannl}
    \mel{N,k}{\mathcal{A}^{\ox N}}{N,l} & = e^{i t G_{00} \sqrt{N} / \sqrt{2 |G_{01}|^2}   -t^2 (G_{00}^{(2)} - p G_{00}^2) / (4 p |G_{01}|^2)  + O(1/\sqrt{N})} \nonumber \\
    & \qquad \qquad  \cdot \sum_{m=0}^l \frac{\sqrt{k! \, l!}}{m! (k-m)! (l-m)!} \left(  \frac{i t G_{10}}{\sqrt{2 |G_{01}|^2}} \right)^{k-m}   \left( \frac{i t G_{01}}{\sqrt{2 |G_{01}|^2}} \right)^{l-m} \big[ 1 +O(1/\sqrt{N}) \big] \, .
\end{align}
For the case $l>k$ all we have to do is exchange $k$ and $l$, so that the sum only runs until the smallest of the two, and also exchange $G_{10}$ and $G_{01}$. Then, defining $\a^2 = \s^2/(p^2 |G_{01}|^2) +  (G^{(2)}_{00} - p G_{00}^2)/(p |G_{01}|^2)$ and $G_{01}=|G_{01}| e^{i \j}$, the characteristic function in the limit reads
\begin{align}
    \lim_{N \to \infty} \chi_N(t) & = e^{-\a^2 t^2/4} \sum_{k,l} e^{i l \j} c_{kl} e^{-ik\j}  \sum_{m=0}^{\min (k,l)}  \frac{\sqrt{k! \, l!}}{m! (k-m)! (l-m)!} \left( \frac{i t}{\sqrt{2}} \right)^{k+l-2m}   \, .
\end{align}
Since this is a continuous characteristic function, we conclude that the random variable $\l_N X_N + \m_N$ converges in distribution. In order to compute its distribution, we take the Fourier transform of the above expression:
\begin{align}
    P(x) & = \frac{1}{2 \pi} \int_{-\infty}^{+\infty} dt \, e^{-itx} \, \chi(t) \nonumber \\
    & =   \sum_{k,l} e^{i l \j} c_{kl} e^{-ik\j}  \sum_{m=0}^{\min (k,l)}  \frac{\sqrt{k! \, l!}}{m! (k-m)! (l-m)!} \left(\frac{1}{\sqrt{2}} \right)^{k+l-2m} \frac{1}{2 \pi}  \int_{-\infty}^{+\infty} dt \,( i t  )^{k+l-2m}  e^{-itx- \a^2 t^2/4}  \nonumber \\
    & =  \sum_{k,l} e^{i l \j} c_{kl} e^{-ik\j} \sum_{m=0}^{\min (k,l)}  \frac{\sqrt{k! \, l!}}{m! (k-m)! (l-m)!} \left( \frac{1}{\sqrt{2}}  \right)^{k+l-2m} (-1)^{k+l-2m} \frac{d^{k+l-2m}}{dx^{k+l-2m}} \frac{1}{2 \pi}   \int_{-\infty}^{+\infty} dt \,  e^{-itx- \a^2 t^2/4}  \nonumber \\
    & =   \sum_{k,l} e^{i l \j} c_{kl} e^{-ik\j}  \sum_{m=0}^{\min (k,l)}  \frac{\sqrt{k! \, l!}}{m! (k-m)! (l-m)!} \left(\frac{1}{\sqrt{2}}  \right)^{k+l-2m} (-1)^{k+l-2m}  \frac{d^{k+l-2m}}{dx^{k+l-2m}} \frac{e^{-x^2/ \a^2}}{\sqrt{\pi  \a^2}}  \, .
\end{align}
Using Rodrigues' formula for Hermite polynomials \cite{abramowitz}
\begin{align}
    H_n(x) = (-1)^n e^{x^2} \frac{d^n}{dx^n} e^{-x^2} \, 
\end{align}
we have
\begin{align}
    P(x) & = \sum_{k,l}  e^{i l \j} c_{kl} e^{-ik\j}  \sum_{m=0}^{\min (k,l)}  \frac{\sqrt{k! \, l!}}{m! (k-m)! (l-m)!} \left( \frac{1}{\sqrt{2}} \right)^{k+l-2m} \frac{e^{- x^2/\a^2}}{ \a^{k+l-2m}} \frac{H_{k+l-2m} ( x/\a)}{\sqrt{\pi  \a^2}} \nonumber \\
    & = \frac{e^{- x^2/\a^2}}{\sqrt{\pi \a^2 }} \sum_{k,l} e^{i l \j} c_{kl} e^{-ik\j} \sqrt{k! \, l!}  \sum_{m=0}^{\min (k,l)}  \frac{1}{m!} {k+l-2m \choose l-m} \left( \frac{1}{\a} \right)^{k+l-2m}  \frac{H_{k+l-2m} ( x/\a)}{\sqrt{2^{k+l-2m}} \, (k+l-2m)!} \, ,
\end{align}
where we have introduced the binomial coefficient for convenience. Now, the above sum over Hermite polynomials can be written as a product of two Hermite polynomials of order $k$ and $l$ by virtue of the following lemma:
\begin{lemma}
For any constants $\a$, $\b$ and $\g$ satisfying $\a^2=\b^2+\g^2$, and for any non-negative integers $k$ and $l$ with $k \geq l$, the following identity of Hermite polynomials holds:
\begin{align}
    \frac{e^{- x^2/\a^2}}{\sqrt{\pi \a^2}} \sum_{m=0}^l \, &  \frac{1}{m!} \, {k+l-2m \choose l-m} \left( \frac{\g}{\a } \right)^{k+l-2m} \, \frac{H_{k+l-2m} \left(x/ \a \right)}{\sqrt{2^{k+l-2m}} \, (k+l-2m)!}  = \nonumber \\
    & \qquad \qquad = \int_{-\infty}^{+\infty} dx'  \frac{e^{-( x-x')^2/\b^2 }}{\sqrt{\pi \b^2}}  \frac{e^{-(x')^2/\g^2}}{\sqrt{\pi \g^2 }} \, \frac{H_k \left( x'/\g \right)}{ \sqrt{2^k} \, k!}  \frac{H_l \left(x'/\g \right)}{\sqrt{2^l} \, l!} \, .
\end{align}
\end{lemma}
\begin{proof}
See \cite{gd}.
\end{proof}
Then, the limit distribution may be written as
\begin{align}
    P(x) & = \sum_{k,l}  e^{i l \j} c_{kl} e^{-ik\j} \sqrt{k! \, l!} \int_{-\infty}^{+\infty} dx'  \frac{e^{-( x-x')^2/\b^2 }}{\sqrt{\pi \b^2}}  \frac{e^{-(x')^2}}{\sqrt{\pi  }} \, \frac{ H_k \left(x' \right)}{ \sqrt{2^k} \, k!}  \frac{ H_l \left(x' \right)}{\sqrt{2^l} \, l!} \nonumber \\
    & = \sum_{k,l}  e^{i l \j} c_{kl} e^{-ik\j}  \int_{-\infty}^{+\infty} dx'  \frac{e^{-( x-x')^2/\b^2 }}{\sqrt{\pi \b^2}}  \left[  \frac{e^{-(x')^2/2}  H_k \left( x'\right) }{\sqrt{2^k \, k! \, \sqrt{\pi }}}   \right] \left[ \frac{e^{-(x')^2/2} H_l \left(x' \right) }{\sqrt{2^l \, l! \, \sqrt{ \pi }}}  \right] \nonumber \\
    & = \sum_{k,l}  e^{i l \j} c_{kl} e^{-ik\j}  \int_{-\infty}^{+\infty} dx'  \frac{e^{-( x-x')^2/\b^2 }}{\sqrt{\pi \b^2}}  \braket{k}{x'} \braket{x'}{l} \, ,
\end{align}
where $\b^2 = \a^2 - 1$ and we have introduced the wave-functions
\begin{align}
    \braket{x}{k} =  \frac{1}{\sqrt{2^k \, k! \sqrt{\pi}}} \, e^{- x^2/2 } \, H_k ( x )
\end{align}
of a one dimensional harmonic oscillator. In conclusion, we can write the limit probability distribution as
\begin{align}
    P(x) = \tr \Big[  K(x|A) \, \r \, K^\dagger(x|A) \Big]  \, ,
\end{align}
where $\r = \sum_{kl} c_{kl} \ket{l} \bra{k}$ and
\begin{align}
    K(x|A) =  \int_{-\infty}^{+\infty} dx' \, \frac{e^{-(x-x')^2/(2 \b^2)}}{(\pi \b^2)^{1/4}} \, \ket{x'}_\j \, _\j\bra{x'} \, , 
\end{align}
where 
\begin{align}
    \b^2 = \frac{\s^2 + p \mel{0}{\G^\dagger(A^2)}{0} - p^2 \mel{0}{\G^\dagger(A)}{0}^2}{p^2 |\mel{0}{\Gamma^\dagger(A)}{1} |^2} - 1
\end{align}
and
\begin{align}
    \j = \arg \mel{0}{\G^\dagger(A)}{1} \, .
\end{align}
\end{proof}

\section{Proof of theorem \ref{thm:mqbn}.}
\label{app:mqbn}
Now consider the random vector $\vec{X}_N$ with distribution
\begin{align}
    P_N(x_1, \dots, x_n | A_1 , \dots , A_n) = \tr \bigg[ K_N(x_n|A_n) \dots K_N(x_1|A_1) \r_N K_N^\dagger(x_1|A_1) \dots K_N^\dagger(x_n|A_n) \bigg] \, .
\end{align}
We want to show that $\vec{X}_N$ converges in distribution to a random vector $\vec{X}$ with distribution
\begin{align}
    P(x_1, \dots, x_n | A_1 , \dots , A_n) = \tr \bigg[ K(x_n|A_n) \dots K(x_1|A_1) \r K^\dagger(x_1|A_1) \dots K^\dagger(x_n|A_n) \bigg] \, .
\end{align}
For this, it is sufficient to show that
\begin{align}
    \limn \langle N , k | \tilde{K}_N^{(1)} \dots \tilde{K}_N^{(n)} \tilde{K}_N^{(n+1)} | N,l \rangle= \sum_{m=0}^\infty \bigg( \limn \langle N,k | \tilde{K}_N^{(1)} \dots \tilde{K}_N^{(n)} | N , m \rangle \bigg) \bigg( \limn \langle N,m | \tilde{K}_N^{(n+1)} | N , l \rangle \bigg) \, ,
\label{eq:limprodprodlim}
\end{align}
where
\begin{align}
    \tilde{K}_N^{(j)} & = \tilde{K}_N(x_j | A_j) \nonumber \\
    & = \frac{1}{\sqrt{\l_N^{(j)}}} \,  K_N \left( \frac{x_j - \m_N^{(j)}}{\l_N^{(j)}}  | A_j \right) \nonumber \\
    & = \frac{1}{\sqrt{\l_N^{(j)}}} \bigoplus_{M=0}^N \sum_{a_1 \dots a_M} \frac{e^{-\big( (x_j - \m_N^{(j)})/\l_N^{(j)} - \sum_{i=1}^M a_i \big)^2/(4 N \s^2) }}{(2 \pi N \s^2)^{1/4}} \Pi_{a_1 | A_j} \ox \dots \ox \Pi_{a_M | A_j} \nonumber \\
    & = \frac{1}{ \big( 2 \pi N \s^2 (\l_N^{(j)})^2 \big)^{1/4}} \bigoplus_{M=0}^N \sum_{a_1 \dots a_M} \sqrt{\frac{ N \s^2 (\l_N^{(j)})^2}{\pi}} \int dp \, e^{i p x_j}  e^{-N \s^2 (\l_N^{(j)})^2 p^2 - i p (\m_N^{(j)} + \l_N^{(j)} \sum_{i=1}^M a_i )} \, \Pi_{a_1 | A_j} \ox \dots \ox \Pi_{a_M | A_j} \nonumber \\
    & = \left( \frac{N \s^2 (\l_N^{(j)})^2 }{2 \pi^3} \right)^{1/4} \int dp \, e^{i p x_j  - N \s^2 (\l_N^{(j)})^2 p^2} \nonumber \\
    & \qquad \qquad \cdot \bigoplus_{M=0}^N e^{-ip\m_N^{(j)} (1-M/N)} \left( \sum_{a_1} e^{-i p ( \l_N^{(j)} a_1 + \m_N^{(j)}/N )} \Pi_{a_1|A_j} \right) \ox \dots \ox \left( \sum_{a_M} e^{-i p (\l_N^{(j)} a_M + \m_N^{(j)}/N)} \Pi_{a_M|A_j} \right) 
\end{align}
for some constants $\l_N^{(j)}$ and $\m_N^{(j)}$. Choosing as before $\l_N^{(j)} = (2 N |(A_j)_{01}|^2)^{-1/2}$ and $\m_N^{(j)} = - (A_j)_{00} \sqrt{N} / \sqrt{2 | (A_j)_{01}|^2}$ and defining $\mathcal{A}_j = \sum_{a} e^{-i p (\l_N^{(j)} a+ \m_N^{(j)} /N)} \Pi_{a|A_j}$, we have
\begin{align}
    \tilde{K}_N^{(j)} & = \left( \frac{\s^2  }{4 \pi^3 |(A_j)_{01}|^2} \right)^{1/4} \int dp \, e^{i p x_j - \s^2 p^2 / (2 |(A_j)_{01}|^2) } \bigoplus_{M=0}^N e^{-ip\m_N^{(j)} (1-M/N)} \mathcal{A}_j^{\ox M} \, .
\end{align}
Then the left hand side of \eqref{eq:limprodprodlim} is
\begin{align}
    \text{LHS} & = \limn \left( \frac{1}{4 \pi^3} \right)^{(n+1)/4} \sqrt{ \frac{\s_1 \dots \s_{n+1}}{ |(A_1)_{01}| \dots |(A_{n+1})_{01}|}} \nonumber \\
    & \qquad \qquad \cdot \int dp_1 \dots dp_{n+1} e^{i \sum_{j=1}^{n+1} x_j p_j  - \frac{1}{2} \sum_{j=1}^{n+1} \s^2 p_j^2 / |(A_j)_{01}|^2 } \langle N,k | ( \mathcal{A}_1 \dots \mathcal{A}_{n+1})^{\ox N} | N, l \rangle \, , 
\end{align}
while the right hand side is
\begin{align}
    \text{RHS} & = \sum_{m=0}^\infty \Bigg\{ \limn \left( \frac{1}{4 \pi^3} \right)^{n/4} \sqrt{ \frac{\s_1 \dots \s_{n}}{ |(A_1)_{01}| \dots |(A_{n})_{01}|}} \nonumber \\
    & \qquad \qquad \qquad \cdot \int dp_1 \dots dp_{n} e^{i \sum_{j=1}^{n} x_j p_j - \frac{1}{2} \sum_{j=1}^{n} \s^2 p_j^2 / |(A_j)_{01}|^2 } \langle N,k | ( \mathcal{A}_1 \dots \mathcal{A}_{n})^{\ox N} | N, m \rangle \Bigg\} \nonumber \\
    &  \qquad \quad \cdot \left\{ \limn \left( \frac{1}{4 \pi^3} \right)^{1/4} \sqrt{ \frac{\s_{n+1}}{ |(A_{n+1})_{01}|}} \int dp_{n+1} e^{i  x_{n+1} p_{n+1} - \frac{1}{2} \s_{n+1}^2 p_{n+1}^2 / |(A_{n+1})_{01}|^2 } \langle N,m |  \mathcal{A}_{n+1}^{\ox N} | N, l \rangle \right\} \, .
\end{align}
It is then sufficient to show that $\text{LHS}' = \text{RHS}'$, where
\begin{align}
    \text{LHS}' = \limn  \langle N,k | ( \mathcal{A}_1 \dots \mathcal{A}_{n+1})^{\ox N} | N, l \rangle 
\end{align}
and
\begin{align}
    \text{RHS}' & = \sum_{m=0}^\infty \bigg( \limn  \langle N,k | ( \mathcal{A}_1 \dots \mathcal{A}_{n})^{\ox N} | N, m \rangle \bigg)  \bigg( \limn \langle N,m |  \mathcal{A}_{n+1}^{\ox N} | N, l \rangle \bigg) \, .
\end{align}
Expanding
\begin{align}
    \mathcal{A}_1 \dots \mathcal{A}_n & = \bigg( 1 - i p_1 \frac{A_1-(A_1)_{00}}{\sqrt{2 N |(A_1)_{01}|^2}} - \frac{p_1^2}{2} \frac{\big( A_1-(A_1)_{00} \big)^2}{2 N |(A_1)_{01}|^2} + O(N^{-3/2}) \bigg) \dots \nonumber \\
    & \qquad \dots \bigg( 1 - i p_n \frac{A_n-(A_n)_{00}}{\sqrt{2 N |(A_n)_{01}|^2}} - \frac{p_n^2}{2} \frac{\big( A_n-(A_n)_{00} \big)^2}{2 N |(A_n)_{01}|^2} + O(N^{-3/2}) \bigg) \nonumber \\
    & = 1 - \frac{i}{\sqrt{2 N}} \sum_{j=1}^n p_j \frac{A_j - (A_j)_{00}}{|(A_j)_{01}|} - \frac{1}{4N} \sum_{j=1}^n p_j^2 \frac{\big( A_j - (A_j)_{00} \big)^2}{|(A_j)_{01}|^2} - \frac{1}{2N} \sum_{i<j}^n p_i p_j \frac{A_i - (A_i)_{00}}{|(A_i)_{01}|} \frac{A_j - (A_j)_{00}}{|(A_j)_{01}|} + O(N^{-3/2}) \, , 
 \end{align}
we have that 
\begin{align}
    \langle 1 | \mathcal{A}_1 \dots \mathcal{A}_n |1 \rangle & = 1 + O(N^{-1/2}) \, , \nonumber \\
    \langle 1 | \mathcal{A}_1 \dots \mathcal{A}_n |0 \rangle & = - \frac{i}{\sqrt{2N}} \sum_{j=1}^n p_j e^{-i \j_j}+ O(N^{-1}) \, ,\nonumber \\
    \langle 0 | \mathcal{A}_1 \dots \mathcal{A}_n |1 \rangle & = - \frac{i}{\sqrt{2N}} \sum_{j=1}^n p_j e^{i \j_j}+ O(N^{-1}) \, ,\nonumber \\
    \langle 0 | \mathcal{A}_1 \dots \mathcal{A}_n |0 \rangle & = 1 - \frac{1}{4N}  \sum_{j=1}^n p_j^2 - \frac{1}{2N} \sum_{i<j}^n p_i p_j e^{i (\j_i - \j_j)}+ O(N^{-3/2}) \, ,
\end{align}
where we have defined $\langle 0 | A_j | 1 \rangle = |(A_j)_{01}| e^{i \j_j}$, so that using \eqref{eq:dickematrixelements} (from the proof in the previous appendix) we have
\begin{align}
    \limn \langle N,k | \mathcal{A}_1 \dots \mathcal{A}_n | N,l \rangle & = e^{-\frac{1}{4} \sum_{j=1}^n p_j^2 - \frac{1}{2} \sum_{i<j}^n p_i p_j e^{i(\j_i-\j_j)}} \nonumber \\
    & \qquad \cdot \sum_{r=0}^{\min(k,l)} \frac{\sqrt{k! l!}}{r! (k-r)! (l-r)!} \left( -\frac{i}{\sqrt{2}} \sum_{j=1}^n p_j e^{-i \j_j} \right)^{k-r} \left( -\frac{i}{\sqrt{2}} \sum_{j=1}^n p_j e^{i \j_j} \right)^{l-r} \, .
\end{align}
Therefore, defining
\begin{align}
    a & = - \frac{i}{\sqrt{2}} \sum_{j=1}^n p_j e^{i \j_j} \, , \nonumber \\
    b & = - \frac{i}{\sqrt{2}} \sum_{j=1}^n p_j e^{-i \j_j} \, ,\nonumber \\
    a' & = - \frac{i}{\sqrt{2}}  p_{n+1} e^{i \j_{n+1}} \, , \nonumber \\
    b & = - \frac{i}{\sqrt{2}} p_{n+1} e^{-i \j_{n+1}} \, ,
\end{align}
we have
\begin{align}
    \text{LHS}' & = e^{-\frac{1}{4} \sum_{j=1}^{n+1} p_j^2- \frac{1}{2} \sum_{i<j}^{n+1} p_i p_j e^{i(\j_i-\j_j)}} \sum_{r=0}^{\min(k,l)} \frac{\sqrt{k! l!}}{r! (k-r)! (l-r)!} \left( b+b'\right)^{k-r} \left( a+a' \right)^{l-r} \nonumber \\
    & = \sqrt{k! \, l!} \, e^{-\frac{1}{4} \sum_{j=1}^{n+1} p_j^2- \frac{1}{2} \sum_{i<j}^{n} p_i p_j e^{i(\j_i-\j_j)}} f_{kl}(a,b,a',b') \, ,
\end{align}
where we have defined the function
\begin{align}
    f_{kl} (a, b, a', b') & = e^{ab'} (a+a')^l (b+b')^k \sum_{r=0}^{\min(k,l)} \frac{ (a + a')^{-r} ( b+b')^{-r}}{r! (k-r)! (l-r)!} \nonumber \\
    & = e^{ab'} (a+a')^{l-\min(k,l)} \, (b+b')^{k-\min(k,l)} \sum_{r=0}^{\min(k,l)} \frac{ (a + a')^{\min(k,l)-r} \, ( b+b')^{\min(k,l)-r}}{r! (k-r)! (l-r)!} \nonumber \\
    & = e^{ab'} (a+a')^{l-\min(k,l)} \, (b+b')^{k-\min(k,l)} \sum_{q=0}^{\min(k,l)} \frac{ (a + a')^{q} \, ( b+b')^{q}}{ \big( \min(k,l)-q \big)! \, \big(\max(k,l)- \min(k,l)+q \big)! \, q!} \nonumber \\
    & = e^{ab'} \frac{(a+a')^{l-\min(k,l)} \, (b+b')^{k-\min(k,l)}}{\max(k,l)!} \sum_{q=0}^{\min(k,l)} {\max(k,l) \choose \min(k,l)-q} \frac{ (a + a')^{q} \, ( b+b')^{q}}{ q!} \nonumber \\
     & =  e^{ab'} \frac{(a+a')^{l- \min(k,l)} (b+b')^{k-\min(k,l)}}{\max(k,l)!} L_{\min(k,l)}^{|k-l|} \Big(-(a + a') ( b+b') \Big) \, ,
\end{align}
in terms of the generalized Laguerre polynomials
\begin{align}
    L_k^m(x) = \sum_{q=0}^k {k+m \choose k-q} \frac{(-x)^q}{q!} \, .
\end{align}
On the other hand, we have that
\begin{align}
    \text{RHS}' & =  \sum_{m=0}^\infty \left\{ e^{-\frac{1}{4} \sum_{j=1}^{n} p_j^2- \frac{1}{2} \sum_{i<j}^n p_i p_j e^{i(\j_i-\j_j)}} \sum_{s=0}^{\min(k,m)}  \frac{\sqrt{k! m!}}{s! (k-s)! (m-s)!} b^{k-s} a^{m-s}\right\} \nonumber\\
    & \qquad \qquad \cdot \left\{ e^{-\frac{1}{4}  p_{n+1}^2} \sum_{t=0}^{\min(m,l)}  \frac{\sqrt{m! l!}}{t! (m-t)! (l-t)!} b'^{m-t} a'^{l-t}\right\} \nonumber \\
    & = \sqrt{k! \, l!} \, e^{-\frac{1}{4} \sum_{j=1}^{n+1} p_j^2- \frac{1}{2} \sum_{i<j}^n p_i p_j e^{i(\j_i-\j_j)}} g_{kl}(a,b,a',b')
\end{align}
where we have defined the function
\begin{align}
    g_{kl}(a, b, a', b') & = (a')^l b^k \sum_{m=0}^\infty m! \, (ab')^m \sum_{s=0}^{\min(k,m)} \frac{(a b)^{-s}}{s! (k-s)! (m-s)!} \sum_{t=0}^{\min(m,l)} \frac{(a'b')^{-t}}{t! (m-t)! (l-t)!} \nonumber \\
    & = (a')^l b^k \sum_{m=0}^\infty m! \, (ab')^m \, (ab)^{-\min(k,m)}  \sum_{s=0}^{\min(k,m)}  \frac{(ab)^{\min(k,m) -s}}{s! (k-s)! (m-s)!} \, (a' b')^{-\min(m,l)}  \sum_{t=0}^{\min(m,l)} \frac{(a'b')^{\min(m,l)-t}}{t! (m-t)! (l-t)!}  \nonumber \\
    & = (a')^l b^k \sum_{m=0}^\infty m! \, (ab')^m \, (ab)^{-\min(k,m)}  \sum_{u=0}^{\min(k,m)}  \frac{(ab)^{u}}{\big( \min(k,m)-u \big)!  \, \big( \max(k,m) - \min(k,m) + u \big)! \, u!} \nonumber \\
    & \qquad \qquad \qquad \qquad \qquad \cdot   \, (a' b')^{-\min(m,l)}  \sum_{v=0}^{\min(m,l)} \frac{(a'b')^{v}}{\big( \min(m,l)-v \big)!  \, \big( \max(m,l) - \min(m,l) + v \big)! \, v!}  \nonumber \\
     & = (a')^l b^k \sum_{m=0}^\infty m! \, (ab')^m \, \frac{(ab)^{-\min(k,m)}}{\max(k,m)!} \sum_{u=0}^{\min(k,m)} {\max(k,m) \choose \min(k,m)-u} \frac{(ab)^{u}}{u!} \nonumber \\
    & \qquad \qquad \qquad \qquad \qquad \cdot  \frac{(a' b')^{-\min(m,l)}}{\max(m,l)!} \sum_{v=0}^{\min(m,l)} {\max(m,l) \choose \min(m,l)-v} \frac{(a'b')^{v}}{v!} \nonumber \\
    & = (a')^l b^k \sum_{m=0}^\infty m! \, (ab')^m \, \frac{(ab)^{-\min(k,m)}}{\max(k,m)!} L_{\min(k,m)}^{|k-m|} (-ab) \, \frac{(a' b')^{-\min(m,l)}}{\max(m,l)!} L_{\min(m,l)}^{|m-l|} (-a'b') \, .
\end{align}
Therefore a sufficient condition for the identity $\text{LHS}=\text{RHS}$ is that $f_{kl}(a,b,a',b') = g_{kl}(a,b,a',b')$ for all $k,l \geq 0$ and for all $a, b, a', b' \in \mathbb{C}$, which holds by virtue of the following lemma:
\begin{lemma}
    Let
    \begin{align}
        f_{kl}(a,b,a',b') & = e^{ab'} \frac{(a+a')^{l-\min(k,l)} \, (b+b')^{k-\min(k,l)}}{\max(k,l)!} L_{\min(k,l)}^{|k-l|} \big( - (a+a') (b+b') \big)
    \end{align}
    and
    \begin{align}
        g_{kl} (a,b,a',b') & = (a')^l b^k \sum_{j=0}^\infty j! (ab')^j \frac{(ab)^{-\min(k,j)}}{\max(k,j)!} L_{\min(k,j)}^{|k-j|} (-ab) \frac{(a'b')^{-\min(l,j)}}{\max(l,j)!} L_{\min(l,j)}^{|l-j|} (-a'b') \, .
    \end{align}
    Then $f_{kl}(a,b,a',b') = g_{kl}(a,b,a',b')$ for every non-negative integers $k$ and $l$ and for every $a,a',b,b' \in \mathbb{C}$. 
\label{lemma}
\end{lemma}
\begin{proof}
    The structure of the proof is as follows. First, seeing that $f_{00}=g_{00}$ is easy. Then we prove that, in the case $l=0$, both $f$ and $g$ satisfy the same recurrence relation, namely
    \begin{align}
        f_{k+1,0} = \frac{b+b'}{k+1} f_{k0} \qquad \text{and} \qquad g_{k+1,0} = \frac{b+b'}{k+1} g_{k0} \, ,
    \tag{REC1}
    \label{eq:rec1}
    \end{align}
    which by induction on $k$ implies that $f_{k0}=g_{k0}$ for all $k \geq 0$. Next, for $l\neq 0$, we prove that, in the case $k \geq l+1$, the following recurrence relations are satisfied:
    \begin{align}
        f_{k,l+1}=\frac{a+a'}{l+1} f_{kl} + \frac{1}{l+1} f_{k-1,l} \qquad \text{and} \qquad g_{k,l+1}=\frac{a+a'}{l+1} g_{kl} + \frac{1}{l+1} g_{k-1,l} \, .
    \tag{REC2}
    \label{eq:rec2}
    \end{align}
    This, together with the previously established identity $f_{k0} = g_{k0}$, implies by induction on $l$ that $f_{k,l+1}=g_{k,l+1}$ for $k \geq l+1$, so that we have $f_{kl}=g_{kl}$ for $k\geq l$. Finally, since the identity is symmetric with respect to the exchange of $k$ and $l$, $a$ and $b$ and $a'$ and $b'$, the identity is proven for all $k$ and $l$. Therefore, it remains to prove equations \eqref{eq:rec1} and \eqref{eq:rec2}. Before proceeding further, let us recall some useful identities of Laguerre polynomials:
    \begin{align}
    L_n^m(x) & = L_n^{m+1}(x) - L_{n-1}^{m+1}(x) \, ,
    \tag{$\ast$}
    \label{eq:Ln}\\
    n L_n^m(x) & = (n+m) L_{n-1}^m(x) - x L_{n-1}^{m+1}(x) \, , 
    \tag{$\ast\ast$}
    \label{eq:nLn}
    \end{align}
    (with the convention that $L_n^m(x) = 0$ if $n < 0 $).\\
    
    \underline{Step 1:} proof of \eqref{eq:rec1}. For $l=0$ we have
    \begin{align}
        f_{k0} & = e^{ab'} \frac{(b+b')^k}{k!} \qquad \text{and} \qquad g_{k0}  = b^k \sum_{j=0}^\infty (ab')^j \frac{(ab)^{-\min(k,j)}}{\max(k,j)!} L_{\min(k,j)}^{|k-j|}(-ab) \, .
    \end{align}
    It is easy to see that the recurrence relation $f_{k+1,0} = \frac{b+b'}{k+1} f_{k0}$ is satisfied. Similarly,
    \begin{align}
        g_{k+1,0} & = b^{k+1} \left\{ \sum_{j=0}^k (ab')^j \frac{(ab)^{-j}}{(k+1)!} L_j^{k+1-j}(-ab) + \sum_{j=k+1}^\infty (ab')^j \frac{(ab)^{-k-1}}{j!} L_{k+1}^{j-k-1} (-ab)  \right\} \nonumber \\
        & = b^{k+1} \left\{ \frac{1}{(k+1)!} \sum_{j=0}^k \left( \frac{b'}{b} \right)^j  L_j^{k+1-j}(-ab) + \sum_{j=k+1}^\infty (ab')^j \frac{(ab)^{-k-1}}{j!} \, \frac{j L_k^{j-k-1}(-ab) + ab L_k^{j-k}(-ab)}{k+1}  \right\} \nonumber \\
        & = \frac{b^{k+1}}{k+1} \Bigg\{  \frac{1}{k!} \sum_{j=0}^k \left( \frac{b'}{b} \right)^j L_j^{k+1-j}(-ab)   \nonumber \\
        & \qquad \qquad \qquad   + ab' (ab)^{-1} \sum_{j=k+1}^\infty (ab')^{j-1} \frac{(ab)^{-k}}{(j-1)!} L_{k}^{j-k-1} (-ab)  + \sum_{j=k+1}^\infty (ab')^{j} \frac{(ab)^{-k}}{j!} L_{k}^{j-k} (-ab) \Bigg\} \nonumber \\
        & = \frac{b^{k+1}}{k+1} \Bigg\{  \frac{1}{k!} \sum_{j=0}^k \left( \frac{b'}{b} \right)^j L_j^{k+1-j}(-ab)   \nonumber \\
        & \qquad \qquad \qquad + \frac{b'}{b} \sum_{j=k}^\infty (ab')^{j} \frac{(ab)^{-k}}{j!} L_{k}^{j-k} (-ab) + \frac{g_{k0}}{b^k} - \sum_{j=0}^k (ab')^{j} \frac{(ab)^{-j}}{k!} L_{j}^{k-j} (-ab) \Bigg\} \nonumber \\
        & = \frac{b}{k+1} g_{k0} +  \frac{b^{k+1}}{k+1} \Bigg\{  \frac{1}{k!} \sum_{j=0}^k \left( \frac{b'}{b} \right)^j L_j^{k+1-j}(-ab)  \nonumber \\
         & \qquad \qquad \qquad \qquad \qquad \qquad + \frac{b'}{b} \bigg[ \frac{g_{k0}}{b^k} - \sum_{j=0}^{k-1} (ab')^j \frac{(ab)^{-j}}{k!} L_j^{k-j}(-ab) \bigg]  - \frac{1}{k!} \sum_{j=0}^k \left( \frac{b'}{b} \right)^j L_{j}^{k-j} (-ab) \Bigg\} \nonumber \\
        & = \frac{b+b'}{k+1} g_{k0} +  \frac{b^{k+1}}{(k+1)!} \Bigg\{  \sum_{j=0}^k \left( \frac{b'}{b} \right)^j L_j^{k+1-j}(-ab)  \nonumber \\
        & \qquad \qquad \qquad \qquad \qquad \qquad - \frac{b'}{b}  \sum_{j=0}^{k-1} \left( \frac{b'}{b} \right)^j L_j^{k-j}(-ab) - \sum_{j=0}^k \left( \frac{b'}{b} \right)^j L_{j}^{k-j} (-ab) \Bigg\}   \nonumber \\
        & =: \frac{b+b'}{k+1} g_{k0} +  \frac{b^{k+1}}{(k+1)!} R \, ,
    \end{align}
    where in the second equality we have used \eqref{eq:nLn} and in the last equality we have defined
    \begin{align}
        R & := \sum_{j=0}^k \left( \frac{b'}{b} \right)^j \left[  L_j^{k+1-j}(-ab) - L_j^{k-j}(-ab) \right] - \frac{b'}{b} \sum_{j=0}^{k-1} \left( \frac{b'}{b} \right)^j L_j^{k-j}(-ab) \nonumber \\ 
        & = \sum_{j=1}^k \left( \frac{b'}{b} \right)^j  L_{j-1}^{k+1-j}(-ab)  -  \sum_{j=0}^{k-1} \left( \frac{b'}{b} \right)^{j+1} L_j^{k-j}(-ab) \nonumber \\
        & = 0 \, ,
    \end{align}
where in the second equality we have used \eqref{eq:Ln}. \\

\underline{Step 2:} proof of \eqref{eq:rec2}. For $l \neq 0$ and $k\geq l$ we have
\begin{align}
    f_{kl} & = e^{ab'} \frac{ (b+b')^{k-l}}{k!} L_{l}^{k-l} \big(-(a+a')(b+b')\big) \, , 
\end{align}
and for $k \geq l+1$ we have the following recurrence relation for $f$:
\begin{align}
    f_{k,l+1} & = e^{ab'} \frac{(b+b')^{k-l-1}}{k!} L_{l+1}^{k-l-1} \big(-(a+a')(b+b')\big) \nonumber \\
    & = e^{ab'} \frac{(b+b')^{k-l-1}}{k!} \, \frac{k L_l^{k-l-1} \big(-(a+a')(b+b')\big) + (a+a')(b+b') L_l^{k-l} \big(-(a+a')(b+b')\big)}{l+1} \nonumber \\
    & = \frac{1}{l+1} e^{ab'} \frac{(b+b')^{k-l-1}}{(k-1)!} L_l^{k-l-1}  \big(-(a+a')(b+b')\big) + \frac{a+a'}{l+1} e^{ab'} \frac{(b+b')^{k-l}}{k!} L_l^{k-l}  \big(-(a+a')(b+b')\big) \nonumber \\
    & = \frac{1}{l+1} f_{k-1,l} + \frac{a+a'}{l+1} f_{kl} \, ,
\end{align}
where in the second equality we have used \eqref{eq:nLn}. For $g$ we have
\begin{align}
    g_{k,l+1} & = (a')^{l+1} b^k \sum_{j=0}^\infty j! (ab')^j \frac{(ab)^{-\min(k,j)}}{\max(k,j)!} L_{\min(k,j)}^{|k-j|} (-ab) \frac{(a'b')^{-\min(l+1,j)}}{\max(l+1,j)!} L_{\min(l+1,j)}^{|l+1-j|}(-a'b') \nonumber \\
    & = (a')^{l+1} b^k \Bigg\{ \sum_{j=0}^l j! (ab')^j \frac{(ab)^{-j}}{k!} L_{j}^{k-j} (-ab) \frac{(a'b')^{-j}}{(l+1)!} L_{j}^{l+1-j}(-a'b')  \nonumber \\
    & \qquad \qquad \qquad \qquad  + \sum_{j=l+1}^\infty  j! (ab')^j \frac{(ab)^{-\min(k,j)}}{\max(k,j)!} L_{\min(k,j)}^{|k-j|} (-ab) \frac{(a'b')^{-l-1}}{j!} L_{l+1}^{j-l-1}(-a'b') \Bigg\}  \nonumber \\
    & = (a')^{l+1} b^k \Bigg\{ \frac{1}{k! \, (l+1)!} \sum_{j=0}^l j!  (a'b)^{-j} L_{j}^{k-j} (-ab)  L_{j}^{l+1-j}(-a'b')  \nonumber \\
    & \qquad \qquad \qquad \quad +  \sum_{j=l+1}^\infty j! (ab')^j \frac{(ab)^{-\min(k,j)}}{\max(k,j)!} L_{\min(k,j)}^{|k-j|} (-ab)  \frac{(a'b')^{-l-1}}{j!} \, \frac{j L_l^{j-l-1} (-a' b') + a' b' L_l^{j-l}(-a'b')}{l+1} \Bigg\}  \nonumber \\
    & = \frac{(a')^{l+1} b^k}{l+1} \Bigg\{ \frac{1}{k! \, l!} \sum_{j=0}^l j! (a'b)^{-j} L_{j}^{k-j} (-ab)  L_{j}^{l+1-j}(-a'b')  \nonumber \\
    & \qquad \qquad \qquad \quad + (a'b')^{-l-1}\sum_{j=l+1}^\infty   (ab')^j \frac{(ab)^{-\min(k,j)}}{\max(k,j)!} L_{\min(k,j)}^{|k-j|} (-ab)  \,  j \,  L_l^{j-l-1} (-a' b')   \nonumber \\
    & \qquad \qquad \qquad \quad +  \sum_{j=l+1}^\infty  j! (ab')^j \frac{(ab)^{-\min(k,j)}}{\max(k,j)!} L_{\min(k,j)}^{|k-j|} (-ab)  \frac{(a' b')^{-l}}{j!}   L_l^{j-l}(-a'b')\Bigg\}  \nonumber \\
    & = \frac{(a')^{l+1} b^k}{l+1} \Bigg\{ \frac{1}{k! \, l!} \sum_{j=0}^l j! (a'b)^{-j}  L_{j}^{k-j} (-ab) L_{j}^{l+1-j}(-a'b')  \nonumber \\
    & \qquad \qquad \qquad \quad + (a'b')^{-l-1}  \sum_{j=l+1}^k  (ab')^j \frac{(ab)^{-j}}{k!} L_{j}^{k-j} (-ab) \,  j \,  L_l^{j-l-1} (-a' b')   \nonumber \\
    & \qquad \qquad \qquad \quad + (a'b')^{-l-1} \sum_{j=k+1}^\infty   (ab')^j \frac{(ab)^{-k}}{j!} L_{k}^{j-k} (-ab) \, j \, L_l^{j-l-1} (-a' b')   \nonumber \\
    & \qquad \qquad \qquad \quad + \frac{g_{kl}}{(a')^l b^k} - \sum_{j=0}^l j! (ab')^j \frac{(ab)^{-j}}{k!} L_{j}^{k-j} (-ab) \frac{(a'b')^{-j}}{l!} L_j^{l-j}(-a'b') \Bigg\} \nonumber 
\end{align}
\begin{align}
    & = \frac{a'}{l+1} g_{kl} + \frac{(a')^{l+1} b^k}{k! \, (l+1)!} \sum_{j=0}^l j! (a'b)^{-j}  L_{j}^{k-j} (-ab)  \left( L_{j}^{l+1-j}(-a'b')  - L_j^{l-j}(-a'b') \right) \nonumber \\
    & \qquad \qquad \quad + \frac{b^k (b')^{-l-1}}{k! \, (l+1)} \sum_{j=l+1}^k j \left( \frac{b'}{b} \right)^{j} L_j^{k-j} (-ab) L_l^{j-l-1}(-a' b') \nonumber \\
    & \qquad \qquad \quad +\frac{a^{-k+1} (b')^{-l}}{l+1} \sum_{j=k+1}^\infty \frac{(ab')^{j-1}}{(j-1)!} L_k^{j-k}(-ab) L_l^{j-l-1}(-a'b') \nonumber \\
    & = \frac{a'}{l+1} g_{kl} + \frac{(a')^{l+1} b^k}{k! \, (l+1)!} \sum_{j=1}^l j! (a'b)^{-j}  L_{j}^{k-j} (-ab) L_{j-1}^{l+1-j}(-a'b') \nonumber \\
    & \qquad \qquad \quad + \frac{b^k (b')^{-l-1}}{k! \, (l+1)} \sum_{j=l+1}^k  \left( \frac{b'}{b} \right)^{j} j L_j^{k-j} (-ab) L_l^{j-l-1}(-a' b') \nonumber \\
    & \qquad \qquad \quad +\frac{a^{-k+1} (b')^{-l}}{l+1}  \sum_{j=k}^\infty \frac{(ab')^j}{j!} L_k^{j+1-k}(-ab) L_l^{j-l}(-a'b')    \nonumber \\
    & = \frac{a'}{l+1} g_{kl} + \frac{(a')^{l+1} b^k}{k! \, (l+1)!} \sum_{j=1}^l j! (a'b)^{-j}  L_{j}^{k-j} (-ab)  L_{j-1}^{l+1-j}(-a'b') \nonumber \\
    & \qquad \qquad \quad + \frac{b^k (b')^{-l-1}}{k! \, (l+1)} \sum_{j=l+1}^k  \left( \frac{b'}{b} \right)^{j}  \Big( k L_{j-1}^{k-j}(-ab) + ab L_{j-1}^{k-j+1}(-ab) \Big) L_l^{j-l-1}(-a' b') \nonumber \\
    & \qquad \qquad \quad +\frac{a^{-k+1} (b')^{-l}}{l+1} \sum_{j=k}^\infty \frac{(ab')^j}{j!} \Big( L_k^{j-k} (-ab) + L_{k-1}^{j-k+1} (-ab) \Big) L_l^{j-l}(-a'b')    \nonumber \\
    & = \frac{a'}{l+1} g_{kl} + \frac{(a')^{l+1} b^k}{k! \, (l+1)!} \sum_{j=1}^l j! (a'b)^{-j} L_{j}^{k-j} (-ab)  L_{j-1}^{l+1-j}(-a'b') \nonumber \\
    & \qquad \qquad \quad + \frac{b^k (b')^{-l-1}}{(k-1)! \, (l+1)} \sum_{j=l+1}^k  \left( \frac{b'}{b} \right)^{j}  L_{j-1}^{k-j}(-ab)  L_l^{j-l-1}(-a' b') \nonumber \\
    & \qquad \qquad \quad + \frac{a b^{k+1} (b')^{-l-1}}{k! \, (l+1)} \sum_{j=l+1}^k  \left( \frac{b'}{b} \right)^{j}  L_{j-1}^{k-j+1}(-ab) L_l^{j-l-1}(-a' b') \nonumber \\
    & \qquad \qquad \quad +\frac{a^{-k+1} (b')^{-l}}{l+1} \sum_{j=k}^\infty \frac{(ab')^j}{j!}  L_k^{j-k} (-ab) L_l^{j-l}(-a'b')    \nonumber \\
    & \qquad \qquad \quad +\frac{a^{-k+1} (b')^{-l}}{l+1} \sum_{j=k}^\infty \frac{(ab')^j}{j!}  L_{k-1}^{j-k+1} (-ab)  L_l^{j-l}(-a'b') \, ,
\end{align}
where we have used \eqref{eq:nLn} in the third and eighth equalities and \eqref{eq:Ln} in the seventh and eighth. On the other hand we have 
\begin{align}
    g_{kl} & = (a')^l b^k \sum_{j=0}^\infty j! (ab')^j \frac{(ab)^{-\min(k,j)}}{\max(k,j)!} L_{\min(k,j)}^{|k-j|} (-ab) \frac{(a'b')^{-\min(l,j)}}{\max(l,j)!} L_{\min(l,j)}^{|l-j|} (-a'b') \nonumber \\
    & = (a')^l b^k \sum_{j=0}^{k-1} j! (ab')^j \frac{(ab)^{-j}}{k!} L_{j}^{k-j} (-ab) \frac{(a'b')^{-\min(l,j)}}{\max(l,j)!} L_{\min(l,j)}^{|l-j|} (-a'b') \nonumber \\
    & \quad \, + (a')^l b^k \sum_{j=k}^\infty j! (ab')^j \frac{(ab)^{-k}}{j!} L_{k}^{j-k} (-ab) \frac{(a'b')^{-l}}{j!} L_{l}^{j-l} (-a'b') \nonumber \\
    & = \frac{ (a')^l b^k }{k!} \sum_{j=0}^{k-1} j! \left( \frac{b'}{b} \right)^j  L_{j}^{k-j} (-ab) \frac{(a'b')^{-\min(l,j)}}{\max(l,j)!} L_{\min(l,j)}^{|l-j|} (-a'b') \nonumber \\
    & \quad \, + a^{-k} (b')^{-l} \sum_{j=k}^\infty \frac{(ab')^j}{j!} L_{k}^{j-k} (-ab) L_{l}^{j-l} (-a'b')
\end{align}
and 
\begin{align}
    g_{k-1,l} & = (a')^l b^{k-1} \sum_{j=0}^\infty j! (ab')^j \frac{(ab)^{-\min(k-1,j)}}{\max(k-1,j)!} L_{\min(k-1,j)}^{|k-1-j|} (-ab) \frac{(a'b')^{-\min(l,j)}}{\max(l,j)!} L_{\min(l,j)}^{|l-j|} (-a'b')  \nonumber \\
    & = (a')^l b^{k-1} \sum_{j=0}^{k-1} j! (ab')^j \frac{(ab)^{-j}}{(k-1)!} L_{j}^{k-1-j} (-ab) \frac{(a'b')^{-\min(l,j)}}{\max(l,j)!} L_{\min(l,j)}^{|l-j|} (-a'b')  \nonumber \\
    & \quad \, +  (a')^l b^{k-1} \sum_{j=k}^\infty j! (ab')^j \frac{(ab)^{-k+1}}{j!} L_{k-1}^{j-k+1} (-ab) \frac{(a'b')^{-l}}{j!} L_{l}^{j-l} (-a'b')  \nonumber \\
    & = \frac{(a')^l b^{k-1} }{(k-1)!} \sum_{j=0}^{k-1} j! \left( \frac{b'}{b} \right)^j  L_{j}^{k-1-j} (-ab) \frac{(a'b')^{-\min(l,j)}}{\max(l,j)!} L_{\min(l,j)}^{|l-j|} (-a'b')  \nonumber \\
    & \quad \, +  a^{-k+1} (b')^{-l} \sum_{j=k}^\infty \frac{(ab')^j}{j!} L_{k-1}^{j-k+1} (-ab)  L_{l}^{j-l} (-a'b')  \, .
\end{align}
Therefore,
\begin{align}
    g_{k,l+1} & = \frac{a'}{l+1} g_{kl} + \frac{(a')^{l+1} b^k}{k! \, (l+1)!} \sum_{j=1}^l j! (a'b)^{-j} L_{j}^{k-j} (-ab)  L_{j-1}^{l+1-j}(-a'b') \nonumber \\
    & \qquad \qquad \quad + \frac{b^k (b')^{-l-1}}{(k-1)! \, (l+1)} \sum_{j=l+1}^k  \left( \frac{b'}{b} \right)^{j}  L_{j-1}^{k-j}(-ab)  L_l^{j-l-1}(-a' b') \nonumber \\
    & \qquad \qquad \quad + \frac{a b^{k+1} (b')^{-l-1}}{k! \, (l+1)} \sum_{j=l+1}^k  \left( \frac{b'}{b} \right)^{j}  L_{j-1}^{k-j+1}(-ab) L_l^{j-l-1}(-a' b') \nonumber \\
    & \qquad \qquad \quad + \frac{a}{l+1} \left[ g_{kl} - \frac{(a')^l b^k}{k!} \sum_{j=0}^{k-1} j! \left( \frac{b'}{b} \right)^j L_j^{k-j} (-ab) \frac{(a'b')^{-\min(l,j)}}{\max(l,j)!} L_{\min(l,j)}^{|l-j|}(-a'b') \right]    \nonumber \\
    & \qquad \qquad \quad + \frac{1}{l+1} \left[ g_{k-1,l} - \frac{(a')^l b^{k-1}}{(k-1)!} \sum_{j=0}^{k-1} j! \left( \frac{b'}{b} \right)^j L_j^{k-j-1}(-ab) \frac{(a'b')^{-\min(l,j)}}{\max(l,j)!} L_{\min(l,j)}^{|l-j|} (-a'b') \right]   \nonumber \\
    & =: \frac{1}{l+1} g_{k-1,l} + \frac{a+a'}{l+1} g_{kl} + \frac{1}{k! \, (l+1)} R \, ,
\end{align}
where we have defined
\begin{align}
    R  := & \frac{(a')^{l+1} b^k}{l!} \sum_{j=1}^l j! (a'b)^{-j} L_{j}^{k-j} (-ab)  L_{j-1}^{l+1-j}(-a'b') + k b^k (b')^{-l-1} \sum_{j=l+1}^k  \left( \frac{b'}{b} \right)^{j}  L_{j-1}^{k-j}(-ab)  L_l^{j-l-1}(-a' b')  \nonumber \\
    &  + a b^{k+1} (b')^{-l-1} \sum_{j=l+1}^k  \left( \frac{b'}{b} \right)^{j}  L_{j-1}^{k-j+1}(-ab) L_l^{j-l-1}(-a' b') - a (a')^l b^k \sum_{j=0}^{l-1} j! \left( \frac{b'}{b} \right)^j L_j^{k-j} (-ab) \frac{(a'b')^{-j}}{l!} L_{j}^{l-j}(-a'b') \nonumber \\
    & - a (a')^l b^k \sum_{j=l}^{k-1} j! \left( \frac{b'}{b} \right)^j L_j^{k-j} (-ab) \frac{(a'b')^{-l}}{j!} L_{l}^{j-l}(-a'b') - k (a')^l b^{k-1} \sum_{j=0}^{l-1} j! \left( \frac{b'}{b} \right)^j L_j^{k-j-1}(-ab) \frac{(a'b')^{-j}}{l!} L_{j}^{l-j} (-a'b')\nonumber \\
    &  - k (a')^l b^{k-1} \sum_{j=l}^{k-1} j! \left( \frac{b'}{b} \right)^j L_j^{k-j-1}(-ab) \frac{(a'b')^{-l}}{j!} L_{l}^{j-l} (-a'b') \nonumber \\
    = &   \frac{(a')^{l+1} b^k}{l!} \sum_{j=1}^l j! (a'b)^{-j} L_{j}^{k-j} (-ab)  L_{j-1}^{l+1-j}(-a'b') + k b^{k-1} (b')^{-l} \sum_{j=l}^{k-1}  \left( \frac{b'}{b} \right)^{j}  L_{j}^{k-j-1}(-ab)  L_l^{j-l}(-a' b')\nonumber \\
    & + a b^{k} (b')^{-l} \sum_{j=l}^{k-1}  \left( \frac{b'}{b} \right)^{j}  L_{j}^{k-j}(-ab) L_l^{j-l}(-a' b') - \frac{a (a')^l b^k}{l!} \sum_{j=0}^{l-1} j!  (a'b)^{-j} L_j^{k-j} (-ab)  L_{j}^{l-j}(-a'b') \nonumber \\
    &  - a  b^k (b')^{-l} \sum_{j=l}^{k-1}  \left( \frac{b'}{b} \right)^j L_j^{k-j} (-ab)  L_{l}^{j-l}(-a'b') - \frac{k (a')^l b^{k-1}}{l!} \sum_{j=0}^{l-1} j!  (a'b)^{-j} L_j^{k-j-1}(-ab)  L_{j}^{l-j} (-a'b')\nonumber \\
    &  - k  b^{k-1} (b')^{-l} \sum_{j=l}^{k-1}  \left( \frac{b'}{b} \right)^j L_j^{k-j-1}(-ab)  L_{l}^{j-l} (-a'b') \, .
\end{align}
Here the second term cancels the last and the third cancels the fifth, so we have
\begin{align}
    R & = \frac{(a')^l b^{k-1}}{l!} \Bigg\{ a' b \sum_{j=1}^l j! (a'b)^{-j} L_{j}^{k-j} (-ab)  L_{j-1}^{l+1-j}(-a'b') \nonumber \\
    & \qquad \qquad \qquad - ab \sum_{j=0}^{l-1} j!  (a'b)^{-j}  L_j^{k-j} (-ab)  L_{j}^{l-j}(-a'b')  - k \sum_{j=0}^{l-1} j! (a'b)^{-j}  L_j^{k-j-1}(-ab)  L_{j}^{l-j} (-a'b') \Bigg\} \nonumber \\
    & = \frac{(a')^l b^{k-1}}{l!} \Bigg\{ a' b \sum_{j=1}^l j! (a'b)^{-j} L_{j}^{k-j} (-ab)  L_{j-1}^{l+1-j}(-a'b') -  \sum_{j=0}^{l-1} j!  (a'b)^{-j} (j+1) L_{j+1}^{k-j-1} (-ab)  L_{j}^{l-j}(-a'b') \Bigg\} \nonumber \\
    & = 0 \, ,
\end{align}
where in the second equality we have used \eqref{eq:nLn}.
\end{proof}

\section{Leggett-Garg CHSH inequality violation}
\label{app:lgviolation}
Here we compute the maximal violation of the Leggett-Garg CHSH inequality \eqref{eq:lgchsh} using states and measurements as obtained in the macroscopic limit (see main text). For convenience, we will work with phase-space representation of operators. In particular, we define
\begin{align}
    W_\rho(x,p) & := \frac{1}{\pi} \int_{-\infty}^{+\infty} ds \, e^{2 i ps} \mel{x-s}{\rho}{x+s} \, , \nonumber \\
    E(x,p) & := 2 \int_{-\infty}^{+\infty} ds \, e^{-2 ips} \ket{x-s} \bra{x+s} \, ,
\end{align}
so that $\rho = \int dx \, dp \, W_\rho(x,p) E(x,p)$. Defining $K_\j(\x) := K(\x | \cos \j \, \s_x + \sin \j \, \s_y)$ corresponding to the case where Alice measures spin in the $(\cos \j , \sin \j , 0)$ direction, we have
\begin{align}
    K_0(\x) \rho K_0(\x)^\dagger & = \int_{-\infty}^{+\infty} dx \, dp \, W_\rho(x,p) K_0(\x) E(x,p) K_0(\x)^\dagger \nonumber \\
    & = \int_{-\infty}^{+\infty} dx \, dp \, W_\rho(x,p) \cdot 2 \int_{-\infty}^{+\infty} ds \, e^{-2ip s} K_0(\x) \ket{x-s} \bra{x+s} K_0(\x)^\dagger \nonumber \\
    & = 2 \int_{-\infty}^{+\infty} dx \, dp \, ds \, W_\rho(x,p)   e^{-2ip s} \int_{-\infty}^{+\infty} dy \, \frac{e^{-(y-\x)^2/(2 \s^2)}}{(\pi \s^2)^{1/4}} \, \ket{y} \braket{y}{x-s} \int_{-\infty}^{+\infty} dy' \frac{e^{-(y'-\x)^2/(2 \s^2)}}{( \pi \s^2)^{1/4}}  \braket{x+s}{y'} \bra{y'} \nonumber \\
    & =\frac{2}{\sqrt{\pi \s^2}} \int_{-\infty}^{+\infty} dx \, dp \, ds \, W_\rho(x,p)   e^{-2ip s} \, e^{-(x-s-\x)^2/(2 \s^2)} \, e^{-(x+s-\x)^2/(2 \s^2)}  \, \ket{x-s} \bra{x+s} \nonumber \\
    & = \frac{2}{\sqrt{\pi \s^2}} \int_{-\infty}^{+\infty} dx \, dp \, ds \, W_\rho(x,p)   \, e^{-2ip s} \, e^{-(x-\x)^2/\s^2}   \, e^{-s^2/\s^2}  \,  \ket{x-s} \bra{x+s} \, .
\end{align}
Therefore
\begin{align}
    P_\j(\x,\h) & = \tr \Big[ K_\j(\h) K_0(\x) \rho K_0(\x)^\dagger K_\j(\h)^\dagger \Big] \nonumber \\
    & = \frac{2}{\sqrt{\pi \s^2}} \int_{-\infty}^{+\infty} dx \, dp \, ds \, W_\rho(x,p)   \, e^{-2ip s} \, e^{-(x-\x)^2/\s^2}   \, e^{-s^2/\s^2}  \, \tr \Big[  K_\j(\h)  \ket{x-s} \bra{x+s} K_\j(\h)^\dagger \Big] \nonumber \\
    & = \frac{2}{\sqrt{\pi \s^2}} \int_{-\infty}^{+\infty} dx \, dp \, ds \, W_\rho(x,p)   \, e^{-2ip s} \, e^{-(x-\x)^2/\s^2}   \, e^{-s^2/\s^2}  \, \int_{-\infty}^{+\infty} dy  \frac{e^{-(y-\h)^2/\s^2}}{\sqrt{\pi \s^2}} \braket{x+s}{y}_\j \, _\j\braket{y}{x-s} \, .
\end{align}
Now, using that \cite{cvmub}
\begin{align}
    \braket{x}{y}_\j = \frac{1}{\sqrt{2 \pi |\sin \j |}} \exp \left\{ - \frac{i \cos \j}{2 \sin \j} \left( x- \frac{y}{\cos \j} \right)^2 \right\} \, ,
\end{align}
the above expression reads
\begin{align}
    P_\j (\x,\h) & = \frac{1}{\pi^2  \s^2 | \sin \j |} \int_{-\infty}^{+\infty} dx \, dp \, ds \, dy \, W_\rho(x,p) \, e^{\a(x,p,s,y)} \, ,
\end{align}
where
\begin{align}
    \a(x,p,s,y) & =  -2ip s - \frac{(x-\x)^2}{\s^2} - \frac{ s^2}{\s^2} - \frac{(y-\h)^2}{\s^2}   -\frac{i \cos \j}{2 \sin \j} \left( x + s - \frac{y}{\cos \j} \right)^2  + \frac{i \cos \j }{2 \sin \j} \left( x - s - \frac{y}{\cos \j} \right)^2  \nonumber \\
    & =   -2ip s - \frac{(x-\x)^2}{\s^2} - \frac{ s^2}{\s^2} - \frac{(y-\h)^2}{\s^2} - \frac{2 i \cos \j}{\sin \j} \left( x - \frac{y}{\cos \j} \right) s  \, .
\end{align}
Integrating in $s$ gives
\begin{align}
    P_\j(\x,\h) & = \frac{1}{\sqrt{\pi^3 \s^2} |\sin \j |} \int_{-\infty}^{+\infty} dx \, dp \, dy \, W_\rho(x,p) \, e^{\b(x,p,y)} \, ,
\end{align}
where 
\begin{align}
    \b(x,p,y) & = - \frac{(x-\x)^2}{\s^2}  - \frac{(y-\h)^2}{\s^2} - \frac{\s^2}{4} \left[ 2 p + \frac{2 \cos \j}{\sin \j} \left( x - \frac{y}{\cos \j} \right) \right]^2   \nonumber \\
    & = - \frac{(x-\x)^2}{\s^2}  - \frac{y^2}{\s^2} - \frac{\h^2}{\s^2} + \frac{2 \h y}{\s^2} - \s^2 p^2  - \s^2 \frac{\cos^2 \j}{\sin^2 \j} \left( x - \frac{y}{\cos \j} \right)^2 - 2 \s^2 p \frac{\cos \j}{\sin \j} \left( x - \frac{y}{\cos \j} \right) \nonumber \\
    & = - \frac{(x-\x)^2}{\s^2}  - \frac{y^2}{\s^2}  - \frac{\h^2}{\s^2} + \frac{2 \h y}{\s^2} - \s^2 p^2  -  \frac{\s^2 x^2 \cos^2 \j}{\sin^2 \j}  - \frac{\s^2 y^2}{\sin^2 \j} + \frac{2 \s^2 x y \cos \j}{\sin^2 \j} - \frac{2 \s^2 p x \cos \j}{\sin \j}  +   \frac{2 \s^2 p y }{\sin \j} \nonumber \\
    & = - \frac{(x-\x)^2}{\s^2}  - \frac{\h^2}{\s^2}  - \frac{\s^2}{\sin^2 \j} p^2 \sin^2 \j  -  \frac{\s^2}{\sin^2 \j} x^2 \cos^2 \j   - \frac{2 \s^2}{\sin^2 \j}   x \cos \j \, p \sin \j \nonumber \\
    & \quad \,- \frac{ \s^4 + \sin^2 \j}{ \s^2 \sin^2 \j} y^2 + \left( \frac{2\h}{\s^2} + \frac{2 \s^2 x \cos \j}{\sin^2 \j} + \frac{2 \s^2 p}{\sin \j} \right) y\, .
\end{align}
Integrating in $y$ gives
\begin{align}
    P_\j(\x,\h) & = \frac{1}{ \pi \sqrt{\s^4 + \sin^2 \j}}  \int_{-\infty}^{+\infty} dx \, dp \, W_\rho(x,p) \, e^{\g(x,p)} \, ,
\end{align}
where
\begin{align}
    \g(x,p) & =  - \frac{(x-\x)^2}{\s^2}  - \frac{\h^2}{\s^2}  - \frac{\s^2}{\sin^2 \j} (x \cos \j + p \sin \j)^2 + \frac{1}{\s^2} \frac{\sin^2 \j}{\s^4 + \sin^2 \j} \left(\h + \frac{\s^4}{\sin^2 \j} (x \cos \j + p \sin \j) \right)^2 \nonumber \\
    & =  - \frac{(x-\x)^2}{\s^2}  - \frac{1}{\s^2} \left( 1 - \frac{\sin^2 \j}{\s^4 + \sin^2 \j} \right) \h^2 - \frac{\s^2}{\sin^2 \j} \left( 1- \frac{\s^4}{\s^4 + \sin^2 \j} \right) (x \cos \j + p \sin \j)^2 \nonumber \\
    & \quad \, + \frac{2 \s^2}{\s^4 + \sin^2 \j} \h (x \cos \j + p \sin \j ) \nonumber \\
    & = - \frac{(x-\x)^2}{\s^2}  - \frac{\s^2}{\s^4 + \sin^2 \j} \h^2 - \frac{\s^2}{\s^4 + \sin^2 \j} (x \cos \j + p \sin \j)^2 + \frac{2 \s^2}{\s^4 + \sin^2 \j} \h (x \cos \j + p \sin \j ) \nonumber \\
    & = - \frac{(x-\x)^2}{\s^2} - \frac{\s^2}{\s^4 + \sin^2 \j} \big( x \cos \j + p \sin \j - \h \big)^2 \, .
\end{align}
So we have
\begin{align}
    P_\j (\x,\h) = \frac{1}{\pi \sqrt{\s^4 + \sin^2 \j}} \int_{-\infty}^{+\infty} dx \, dp \, W_\rho (x, p ) \, \exp \left\{ - \frac{(x-\x)^2}{\s^2} - \frac{\s^2}{\s^4 + \sin^2 \j} \big( x \cos \j + p \sin \j - \h \big)^2 \right\} \, .
\end{align}
Then, the correlator of the random variables $\sgn(\x)$ and $\sgn(\h)$ is
\begin{align}
    \avg{\sgn(\x) \sgn(\h)}_\j & = \int_{-\infty}^{+\infty} d\x \, d\h \, \sgn(\x) \sgn(\h) P_\j(\x, \h) \nonumber \\
    & = \frac{1}{\pi \sqrt{\s^4 + \sin^2 \j}} \int_{-\infty}^{+\infty} dx \, dp \, W_\rho (x, p ) \, \int_{-\infty}^{+\infty} d\x \, \sgn(\x) e^{ - \frac{(x-\x)^2}{\s^2}} \int_{-\infty}^{+\infty} d\h \, \sgn(\h) e^{ - \frac{\s^2 ( \h - x \cos \j - p \sin \j  )^2}{\s^4 + \sin^2 \j}  } \nonumber \\
    & = \frac{1}{\pi \sqrt{\s^4 + \sin^2 \j}} \int_{-\infty}^{+\infty} dx \, dp \, W_\rho (x, p ) \, \sqrt{\pi \s^2} \, \erf \left( \frac{x}{\s} \right) \sqrt{ \pi \frac{\s^4 + \sin^2 \j}{\s^2}} \, \erf \left( \frac{x \cos \j + p \sin \j}{\sqrt{\frac{\s^4 + \sin^2 \j}{\s^2}}} \right) \nonumber \\
    & = \int_{-\infty}^{+\infty} dx \, dp \, W_\rho (x, p ) \, \erf \left( \frac{x}{\s} \right)  \erf \left( \frac{x \cos \j + p \sin \j}{\sqrt{\frac{\s^4 + \sin^2 \j}{\s^2}}} \right) \, .
\label{eq:correlator}
\end{align}
Now let us compute the Wigner function of the state $\rho = \sum_{k,l} c_{kl} \ket{l} \bra{k}$:
\begin{align}
    W_\rho (x,p) & = \frac{1}{\pi} \sum_{k,l}  c_{kl} \int_{-\infty}^{+\infty} ds \, e^{2ips} \braket{x-s}{l} \braket{k}{x+s} \nonumber \\
    & =  \frac{1}{\pi\sqrt{\pi}} \sum_{k,l} \frac{ c_{kl}}{\sqrt{2^k \, k! \, 2^l \, l!}} \int_{-\infty}^{+\infty} ds \, e^{2ips} e^{-(x-s)^2/2} H_l(x-s) e^{- (x+s)^2/2} H_k (x+s) \nonumber \\
    & =  \frac{e^{- x^2}}{\pi\sqrt{\pi}} \sum_{k,l} \frac{ c_{kl}}{\sqrt{2^k \, k! \, 2^l \, l!}} \int_{-\infty}^{+\infty} ds \, e^{-s^2 +  2ips} \, H_l (x-s)  H_k(x+s)   \nonumber \\
    & = \frac{e^{-x^2}}{\pi\sqrt{\pi}} \sum_{k,l} \frac{c_{kl} }{\sqrt{2^k \, k! \, 2^l \, l!}} \int_{-\infty}^{+\infty} ds \, e^{-(s - ip)^2 - p^2} \,    H_l(x - s ) \, H_k( x+s)  \, .
\end{align}
Defining $\z := s - ip$ we have
\begin{align}
    W_\rho (x,p) & =  \frac{e^{- x^2 - p^2}}{\pi\sqrt{\pi}} \sum_{k,l} \frac{ c_{kl}}{\sqrt{2^k \, k! \, 2^l \, l!}} \int_{-\infty-ip}^{+\infty-ip} d\z \, e^{-\z^2} \,    H_l \big( x-\z - ip\big) \, H_k \big( x+\z +ip \big) \, .
\end{align}
Since the integrand is holomorphic, we can set to zero the imaginary shift in the integration contour. Then, using the property \cite{gradshteyn}
\begin{align}
    \int_{-\infty}^{+\infty} dx \, e^{-x^2} \,    H_k (x + y) \, H_l ( x +z) = 2^{\max(k,l)} \sqrt{\pi} \, \min(k,l)! \,  y^{k-\min(k,l)} z^{l-\min(k,l)} L_{\min(k,l)}^{|k-l|} (-2yz) \, ,
\end{align}
we can write
\begin{align}
    W_\rho (x,p) & =  \frac{ e^{- x^2 - p^2}}{\pi} \sum_{k,l}  c_{kl} \, (-1)^{l} \sqrt{\frac{2^{\max(k,l)} \min(k,l)!}{2^{\min(k,l)} \max(k,l)!}}  \nonumber \\
    & \qquad \qquad \qquad \qquad \qquad  \cdot   \big( -x+ip \big)^{l-\min(k,l)} \big(x + ip)^{k-\min(k,l)} L_{\min(k,l)}^{|k-l|} \big(-2(-  x+ip)(x+ip) \big) \nonumber \\
    & = \frac{ e^{-x^2 - p^2}}{\pi} \Bigg\{  \sum_{k\geq l}  c_{kl} (-1)^l \sqrt{\frac{2^k \, l!}{2^l \, k!}} (x+ip)^{k-l} \, L_l^{k-l} \big( 2 x^2 + 2 p^2 \big) \nonumber \\
    & \qquad \qquad \qquad \qquad \qquad +\sum_{k< l} c_{kl} (-1)^l \sqrt{\frac{2^l \, k!}{2^k \, l!}} (- x+ip)^{l-k} \, L_k^{l-k} \big( 2 x^2 + 2 p^2 \big) \Bigg\} \, .
\end{align}
Plugging this back into the correlator \eqref{eq:correlator} yields
\begin{align}
    \langle \sgn (\x) \sgn (\h) \rangle_\j = & \frac{1}{\pi}   \sum_{k\geq l}  c_{kl} (-1)^l \sqrt{\frac{2^k \, l!}{2^l \, k!}} \int_{- \infty}^{+ \infty} dx dp \,  e^{-x^2 - p^2} (x+ip)^{k-l} \nonumber \\
    & \qquad \qquad \qquad \qquad \qquad \qquad \qquad \quad  \cdot L_l^{k-l} \big( 2 x^2 + 2 p^2 \big)  \erf \left( \frac{x}{\s} \right)  \erf \left( \frac{x \cos \j + p \sin \j}{\sqrt{\frac{\s^4 + \sin^2 \j}{\s^2}}} \right) \nonumber \\
    &  + \frac{1}{\pi}  \sum_{k< l} c_{kl} (-1)^l \sqrt{\frac{2^l \, k!}{2^k \, l!}} \int_{- \infty}^{+ \infty} dx dp \,  e^{-x^2 - p^2} (- x+ip)^{l-k}  \nonumber \\
    & \qquad \qquad \qquad \qquad \qquad \qquad \qquad \quad  \cdot L_k^{l-k} \big( 2 x^2 + 2 p^2 \big) \erf \left( \frac{x}{\s} \right)  \erf \left( \frac{x \cos \j + p \sin \j}{\sqrt{\frac{\s^4 + \sin^2 \j}{\s^2}}} \right) \, .
\end{align}
Let us restrict to the 3-dimensional subspace spanned by the first three excitations in the Hilbert space so that the sums over $k$ and $l$ only run until 2. In this case we can analytically perform the integrals using the identities \cite{korotkov}
\begin{align}
    \int_{-\infty}^{+\infty} dx \, e^{-x^2 } \, \erf (ax+b) & = \sqrt{\pi} \, \erf \left( \frac{b}{\sqrt{1+a^2}} \right) \, , \nonumber \\
    \int_{-\infty}^{+\infty} dx \, x \,e^{-x^2 } \, \erf (ax+b) & = \frac{a}{\sqrt{1+a^2}} \exp \left( -\frac{b^2}{1+a^2} \right) \, , \nonumber \\
    \int_{-\infty}^{+\infty} dx \, x^{2m} e^{-a x^2} \, \erf (b x) \erf (c x) & = 2 \frac{(-1)^m}{\sqrt{\pi}} \frac{\partial^m}{\partial a^m} \left[ \frac{1}{\sqrt{a}} \arctan \left( \frac{b c}{\sqrt{a^2 +a (b^2 +c^2)}} \right) \right]  \, , \quad (a > 0 \, , \, \,  m=0, 1, \dots ) \, ,\nonumber \\
    \int_{-\infty}^{+\infty} dx \, x^2 \, e^{-x^2} \, \erf (a x+b) & = \frac{\sqrt{\pi}}{2} \erf \left( \frac{b}{\sqrt{1+a^2}} \right) - \frac{a^2 b}{(1+a^2)^{3/2}} \exp \left( - \frac{b^2}{1+a^2} \right) \, ,\nonumber \\
    \int_{-\infty}^{+\infty} dx \, x^3 \, e^{-x^2} \, \erf (a x) & = \frac{a}{\sqrt{1+a^2}} \left( 1 + \frac{1}{2} \frac{1}{1+a^2} \right)  \, ,\nonumber \\
     \int_{-\infty}^{+\infty} dx \, x^4 \, e^{-x^2} \, \erf (a x +b) & = \frac{3 \sqrt{\pi}}{4} \erf \left( \frac{b}{\sqrt{1+a^2}} \right) - \frac{3}{2} \frac{a^2 b}{(1+a^2)^{3/2}} \left[1 + \frac{2 a^2 b^2}{3(1+a^2)^2} + \frac{1}{1+a^2 }\right]  \exp \left( - \frac{b^2}{1+a^2} \right) \, .
\end{align}
The results are compact expressions, but too long to show here. Finally, the Leggett-Garg CHSH parameter is
\begin{align}
    C = \langle \sgn (\x) \sgn (\h) \rangle_{\j_A^{(1)}-\j_B^{(1)}} + \langle \sgn (\x) \sgn (\h) \rangle_{\j_A^{(1)}-\j_B^{(2)}} +  \langle \sgn (\x) \sgn (\h) \rangle_{\j_A^{(2)}-\j_B^{(1)}} -\langle \sgn (\x) \sgn (\h) \rangle_{\j_A^{(2)}-\j_B^{(2)}} \, .
\end{align}
Maximizing $C$ over the four angles and the coefficients $c_{kl}$ we have the following violation of the inequality $C\geq 2$:
\begin{align} 
    C = \frac{2}{675 \pi} \Big( 577 + \sqrt{1244179} + 2700 \arctan (1/3) \Big) \simeq 2.416 \, ,
\end{align}
obtained for the angles 
\begin{align}
    \j_A^{(1)} = \frac{\pi}{4} \, , \quad \j_A^{(2)} = \frac{3 \pi}{4} \, , \quad \j_B^{(1)} = \frac{\pi}{2} \, , \quad \j_B^{(2)} = 0 \, ,  
\end{align}
and the state 
\begin{align}
    \ket{\psi} = \sqrt{\frac{1}{2} - \frac{577}{2 \sqrt{1244179}}} \, \ket{0} + \sqrt{\frac{1}{2} + \frac{577}{2 \sqrt{1244179}}} \,  \ket{2} \, .
\end{align}

\end{document}